\documentclass[11pt]{article}

\usepackage[utf8]{inputenc}
\usepackage[T1]{fontenc}
\usepackage{amsmath,amssymb,amsthm}
\usepackage{algorithm}
\usepackage{algpseudocode}
\usepackage{graphicx}
\usepackage{tikz}
\usetikzlibrary{calc, fit, backgrounds, positioning}
\usepackage{hyperref}
\usepackage{cleveref}
\usepackage{booktabs}
\usepackage{xcolor}
\usepackage[margin=1in]{geometry}
\usepackage{authblk}

\newtheorem{theorem}{Theorem}[section]

\newtheorem{corollary}[theorem]{Corollary}
\newtheorem{proposition}[theorem]{Proposition}
\newtheorem{definition}[theorem]{Definition}
\newtheorem{remark}[theorem]{Remark}

\title{Flashback: A Reversible Bilateral Run-Peeling Decomposition of Strings}
\author[1,2]{Thomas Konstantinovsky}
\author[3]{Gur Yaari}

\affil[1]{Faculty of Engineering, Bar-Ilan University, Ramat Gan 5290002, Israel}
\affil[2]{Bar-Ilan Institute for Nanotechnology and Advanced Materials, Bar-Ilan University, Ramat Gan, Israel}
\affil[3]{Department of Pathology, Yale School of Medicine, New Haven, CT 06510, United States}
\date{\today}

\begin{document}

\maketitle

\begin{abstract}
We introduce \textsc{Flashback}, a reversible string decomposition that
repeatedly peels the maximal leading and trailing character runs from a
sentinel-wrapped input, recording each pair as one \emph{bilateral token}.
Decomposition and reconstruction both run in $O(n)$ time and space.
Our central result is a \emph{run-pairing theorem}: Flashback is
equivalent to pairing the first run of the string with the last, the second
with the second-to-last, and so on.  This gives an exact token count of
$1+\lceil r/2\rceil$ for a string with $r$ maximal runs, and matches a
lower bound that holds for any admissible bilateral run-peeling scheme.
From the run-pairing theorem the main structural properties follow as
corollaries: the irreducible peeling kernel uses at most two symbols;
palindromes are precisely the strings whose run-length encoding is
symmetric with an odd number of runs; the image of the decomposition
admits an explicit finite-state characterisation; and changing one run
length rewrites exactly one content token.
\end{abstract}

\section{Introduction}
\label{sec:introduction}

Most string decompositions process the input from one end to the other.
Lempel--Ziv~\cite{lz76,lz77,lz78} parses left-to-right against a dictionary
of previously seen substrings; the Burrows--Wheeler
Transform~\cite{bwt} permutes characters to cluster identical symbols;
run-length encoding walks the string once and collapses adjacent duplicates.
Each of these captures something real (repetition, clustering, local
homogeneity), but none exposes the \emph{outside-in} structure of a string:
the way a string's boundary runs can be peeled away, layer by layer, to
reveal an inner core.

We study one such decomposition.  \textsc{Flashback} wraps the input in
sentinels and, at each step, removes the maximal run from each end at
once, recording the two peeled runs as a single \emph{bilateral token}
together with the split between them.  The companion procedure
$\mathcal{F}^{-1}$ reconstructs the string by nesting the tokens from the
inside out.  The algorithm is deterministic, runs in $O(n)$ time and space,
and is exactly reversible.

The algorithm is short; the structural theory it opens up is the point
of this paper.  Once the decomposition is phrased at the run level, a
single observation drives most of what follows: Flashback is equivalent
to pairing the first run of the string with the last, the second with
the second-to-last, and so on.  From that one identification the token
count, the irreducible kernel, a palindrome test, a description of the
image of $\mathcal{F}$, and an exact edit-locality theorem all fall out
as corollaries.  Flashback is not a compression scheme (the total
character count across tokens equals $n+2$); it is a structural tool for
reasoning about run-level symmetry, and to our knowledge the
decomposition has not been studied before.

\paragraph{Contributions.}
We contribute three groups of results:

\begin{enumerate}
    \item A definition and analysis of Flashback and its inverse, with
          proofs of exact reversibility, uniqueness, and linear time
          and space (Sections \ref{sec:algorithm}--\ref{sec:complexity}).
    \item A \emph{run-pairing theorem} identifying Flashback with
          bilateral pairing of the run-length encoding, giving an exact
          token count of $1+\lceil r/2\rceil$.  A matching lower bound
          shows that no admissible bilateral run-peeling scheme produces
          fewer tokens (\Cref{thm:run_pairing,thm:optimality}).
    \item Structural consequences of the run-pairing theorem: the
          peeling kernel has alphabet size at most two; palindromes are
          exactly the strings whose RLE is symmetric with an odd number
          of runs; the image of $\mathcal{F}$ admits an explicit
          characterisation; and changing one run length rewrites
          exactly one content token.
\end{enumerate}

\section{Preliminaries}
\label{sec:preliminaries}

We work with finite strings over an alphabet $\Sigma$.  For $s\in\Sigma^*$
with $|s|=n$, we write $s[i..j]$ for the inclusive 1-indexed substring and
$s[i..j)$ for its half-open counterpart (empty when $i>j$).

\begin{definition}[Maximal runs]
\label{def:runs}
The \emph{leading run} of a non-empty string $w = w_1\cdots w_m$ is its
longest prefix of a single repeated symbol; the \emph{trailing run} is its
longest suffix of a single repeated symbol.  Both are uniquely determined
and have length $\ge 1$.
\end{definition}

\begin{definition}[Sentinel-wrapped string]
\label{def:sentinel}
Fix two symbols $\texttt{@},\texttt{\$}\notin\Sigma$.  The
\emph{sentinel-wrapped form} of $s$ is $\hat{s} = \texttt{@}\cdot s\cdot
\texttt{\$}$, of length $n+2$.  The sentinels guarantee that at depth~$0$
the leading and trailing runs are distinct length-one runs.
\end{definition}

\begin{definition}[Bilateral token]
\label{def:token}
A \emph{bilateral token} is a pair $\tau=(\sigma,p)$ with $\sigma\in\Sigma^+$
(the concatenation of a front run and a back run) and $p\in\mathbb{N}_0$
the \emph{split position}: $\sigma[1..p]$ is the front, $\sigma[p{+}1..|\sigma|]$
the back.  A token is \emph{terminal} when $p=0$.  We write $\tau$ as
\texttt{$\sigma$\_$p$}; e.g.\ $(\texttt{CFF},1)$ is \texttt{CFF\_1}.
\end{definition}

\begin{definition}[Decomposition operators]
\label{def:operators}
$\mathcal{F}(s)$ denotes the token sequence produced by
\textsc{Flashback-Decompose}$(s)$, and $\mathcal{F}^{-1}(T)$ the string
recovered by \textsc{Flashback-Reverse}$(T)$.
\end{definition}

\section{The Flashback Algorithm}
\label{sec:algorithm}

Flashback consists of two procedures: a \emph{decomposition}
$\mathcal{F}$ that turns a string into a token sequence, and a
\emph{reconstruction} $\mathcal{F}^{-1}$ that inverts it exactly.

\subsection{Decomposition}
\label{sec:decompose}

The idea is simple.  We wrap the input string in two sentinel characters,
and then repeatedly remove the maximal runs at the two ends, emitting each
pair as a single token.  The sentinels $\texttt{@}$ and $\texttt{\$}$ are
there only to make the algorithm uniform: without them a string that
already begins and ends with the same character would need a special case.

Concretely: on an active span, find the leading run and the trailing run;
concatenate them into a symbol string and record the split position
(the length of the leading run); then recurse on the middle.  The
recursion stops when the middle is empty: either the span itself is
empty, or the front and back runs together cover it.

\begin{algorithm}[ht]
\caption{\textsc{Flashback-Decompose}$(s)$}
\label{alg:decompose}
\begin{algorithmic}[1]
\Require String $s \in \Sigma^*$
\Ensure Ordered token sequence $T = [\tau_0, \tau_1, \ldots, \tau_{k-1}]$
\State $\hat{s} \gets \texttt{@} \cdot s \cdot \texttt{\$}$
\State $T \gets [\,]$
\State \Call{Peel}{$\hat{s},\; 1,\; |\hat{s}|{+}1,\; T$}
\State \Return $T$
\end{algorithmic}
\end{algorithm}

\begin{algorithm}[ht]
\caption{\textsc{Peel}$(\hat{s},\; \mathit{lo},\; \mathit{hi},\; T)$}
\label{alg:peel}
\begin{algorithmic}[1]
\Require Sentinel-wrapped string~$\hat{s}$; active window
  $[\mathit{lo},\mathit{hi})$ (half-open); token list~$T$
\If{$\mathit{lo} \ge \mathit{hi}$}
    \State \Return \Comment{empty span}
\EndIf
\Statex
\State \Comment{\textbf{Leading run:}}
\State $\ell \gets 1$
\While{$\mathit{lo} + \ell < \mathit{hi}$ \textbf{and}
       $\hat{s}[\mathit{lo}+\ell] = \hat{s}[\mathit{lo}]$}
    \State $\ell \gets \ell + 1$
\EndWhile
\Statex
\If{$\ell = \mathit{hi} - \mathit{lo}$}
    \State Append $\bigl(\hat{s}[\mathit{lo}..\mathit{hi}),\; 0\bigr)$
           to $T$ \Comment{entire span is one run}
    \State \Return
\EndIf
\Statex
\State \Comment{\textbf{Trailing run:}}
\State $r \gets \mathit{hi} - 1$
\While{$r > \mathit{lo}$ \textbf{and}
       $\hat{s}[r-1] = \hat{s}[\mathit{hi}-1]$}
    \State $r \gets r - 1$
\EndWhile
\Statex
\State $\sigma \gets \hat{s}[\mathit{lo}..\mathit{lo}+\ell) \;\|\;
                     \hat{s}[r..\mathit{hi})$
       \Comment{front $\|$ back}
\State $\mathit{mid\_lo} \gets \mathit{lo} + \ell$;\quad
       $\mathit{mid\_hi} \gets r$
\Statex
\If{$\mathit{mid\_lo} \ge \mathit{mid\_hi}$}
    \State Append $(\sigma,\; 0)$ to $T$
           \Comment{no middle left - terminal token}
\Else
    \State Append $(\sigma,\; \ell)$ to $T$
           \Comment{split position $= \ell$}
    \State \Call{Peel}{$\hat{s},\;\mathit{mid\_lo},\;\mathit{mid\_hi},\; T$}
\EndIf
\end{algorithmic}
\end{algorithm}

A few remarks.  The first token is always $(\texttt{@\$},1)$: because
$\texttt{@}$ and $\texttt{\$}$ are distinct and do not appear in $\Sigma$,
they are always peeled at depth~$0$.  \textsc{Peel} is tail-recursive
(each call makes at most one recursive call, as its last action), so the
procedure converts trivially to a loop (see \Cref{app:iterative_decompose}).
Finally, every character of $\hat{s}$ is \emph{consumed} (placed into a
token's symbol string) exactly once across all levels; the total number of
character inspections remains $O(n)$ as we show in \Cref{sec:complexity}.

\subsection{Reconstruction}
\label{sec:reverse}

Reconstruction is the mirror image.  We allocate an output buffer and walk
the token list from outside in.  For each token we write the front run at
the current position, recurse to fill the gap, and then write the back run
after the inner result has been filled.  Because each character lands
directly in its final slot, the total work is $\Theta(n)$.

\begin{algorithm}[ht]
\caption{\textsc{Flashback-Reverse}$(T)$}
\label{alg:reverse}
\begin{algorithmic}[1]
\Require Token sequence $T = [\tau_0, \tau_1, \ldots, \tau_{k-1}]$
\Ensure Original string $s$
\State Allocate buffer $B$ of size $\sum_{d} |\sigma_d|$
\State $L \gets$ \Call{Nest}{$T,\; 0,\; B,\; 1$}
\State \Return $B[2..L{-}1]$ \Comment{strip sentinels}
\end{algorithmic}
\end{algorithm}

\begin{algorithm}[ht]
\caption{\textsc{Nest}$(T,\; i,\; B,\; \mathit{pos})$}
\label{alg:nest}
\begin{algorithmic}[1]
\Require Token sequence~$T$; index~$i$; buffer~$B$; write position $\mathit{pos}$
\Ensure Number of characters written
\State $(\sigma, p) \gets T[i]$
\If{$i = |T| - 1$} \Comment{innermost token}
    \State Write $\sigma$ to $B[\mathit{pos}..\mathit{pos}+|\sigma|)$
    \State \Return $|\sigma|$
\EndIf
\State Write $\sigma[1..p]$ to $B[\mathit{pos}..\mathit{pos}+p)$
       \Comment{front}
\State $\mathit{inner\_len} \gets$
       \Call{Nest}{$T,\; i{+}1,\; B,\; \mathit{pos}+p$}
\State Write $\sigma[p{+}1..|\sigma|]$ to
       $B[\mathit{pos}+p+\mathit{inner\_len}..)$ \Comment{back}
\State \Return $p + \mathit{inner\_len} + (|\sigma| - p)$
\end{algorithmic}
\end{algorithm}

In words: each token $\tau_i=(\sigma_i,p_i)$ denotes a nested concatenation
$\sigma_i[1..p_i] \;\|\; (\text{inner result}) \;\|\; \sigma_i[p_i{+}1..]$,
with the innermost token returned verbatim.  The algorithm above
realises this in-place by writing each character directly into its final
buffer slot, so the total number of character writes is $n+2$
(\Cref{cor:total_symbol_length}) and no string is ever copied.  A terminal
token ($p=0$) can only appear as the last element of $T$
(\Cref{lem:terminal}).

We prove that decomposition and reconstruction are exact inverses, i.e.\
that $\mathcal{F}^{-1}(\mathcal{F}(s))=s$ for every $s$, in
\Cref{thm:reversibility}.

\subsection{Worked Example}
\label{sec:examples}

Let $s = \texttt{CASSAYFF}$ ($n = 8$), so
$\hat{s} = \texttt{@CASSAYFF\$}$ (length~10).

\begin{center}
\begin{tabular}{ccccl}
\toprule
$d$ & Active span & Front run & Back run & Token \\
\midrule
0 & \texttt{@CASSAYFF\$} & \texttt{@}\,(1) & \texttt{\$}\,(1) & \texttt{@\$\_1} \\
1 & \texttt{CASSAYFF}    & \texttt{C}\,(1) & \texttt{FF}\,(2) & \texttt{CFF\_1} \\
2 & \texttt{ASSAY}       & \texttt{A}\,(1) & \texttt{Y}\,(1)  & \texttt{AY\_1} \\
3 & \texttt{SSA}         & \texttt{SS}\,(2) & \texttt{A}\,(1) & \texttt{SSA\_0} \\
\bottomrule
\end{tabular}
\end{center}

\noindent Reconstruction inside-out:
\texttt{SSA} $\to$
\texttt{A}$\|$\texttt{SSA}$\|$\texttt{Y} = \texttt{ASSAY} $\to$
\texttt{C}$\|$\texttt{ASSAY}$\|$\texttt{FF} = \texttt{CASSAYFF} $\to$
strip sentinels \checkmark.
The RLE of \texttt{CASSAYFF} is
$\texttt{C}^1 \texttt{A}^1 \texttt{S}^2 \texttt{A}^1 \texttt{Y}^1
\texttt{F}^2$ ($r = 6$ runs).  By \Cref{thm:run_pairing},
$k = 1 + \lceil 6/2 \rceil = 4$, with tokens pairing run~1 with run~6
(\texttt{CFF\_1}), run~2 with run~5 (\texttt{AY\_1}), and the middle two
runs forming the peeling kernel \texttt{SSA\_0}.
\Cref{fig:example} shows the peeling visually.

\begin{figure}[ht]
\centering
\begin{tikzpicture}[
  x=0.55cm, y=0.75cm,
  ch/.style={font=\ttfamily, minimum width=0.55cm, minimum height=0.55cm,
             inner sep=0pt, draw=black!60, line width=0.3pt},
  fr/.style={fill=blue!18},
  bk/.style={fill=red!18},
  md/.style={fill=black!5},
  lvl/.style={font=\footnotesize, anchor=east, inner sep=4pt},
  tk/.style={font=\small\ttfamily, anchor=west, inner sep=4pt}
]
\node[lvl] at (-1, 0) {$d=0$};
\node[ch, fr] at (0, 0) {@};
\foreach \c/\i in {C/1, A/2, S/3, S/4, A/5, Y/6, F/7, F/8}
  \node[ch, md] at (\i, 0) {\c};
\node[ch, bk] at (9, 0) {\$};
\node[tk] at (9.5, 0) {@\$\_1};

\node[lvl] at (-1, -1) {$d=1$};
\node[ch, fr] at (1, -1) {C};
\foreach \c/\i in {A/2, S/3, S/4, A/5, Y/6} \node[ch, md] at (\i, -1) {\c};
\node[ch, bk] at (7, -1) {F};
\node[ch, bk] at (8, -1) {F};
\node[tk] at (9.5, -1) {CFF\_1};

\node[lvl] at (-1, -2) {$d=2$};
\node[ch, fr] at (2, -2) {A};
\foreach \c/\i in {S/3, S/4, A/5} \node[ch, md] at (\i, -2) {\c};
\node[ch, bk] at (6, -2) {Y};
\node[tk] at (9.5, -2) {AY\_1};

\node[lvl] at (-1, -3) {$d=3$};
\node[ch, fr] at (3, -3) {S};
\node[ch, fr] at (4, -3) {S};
\node[ch, bk] at (5, -3) {A};
\node[tk] at (9.5, -3) {SSA\_0};
\end{tikzpicture}
\caption{Bilateral peeling of \texttt{CASSAYFF}.  Blue marks the leading
run, red the trailing run, and grey the middle passed to the next depth.
At each level the leading and trailing runs are removed together and
emitted as one token; at $d{=}3$ the front and back runs cover the whole
span, so the level emits a terminal token and the recursion stops.}
\label{fig:example}
\end{figure}

\section{Correctness}
\label{sec:correctness}

This section establishes the two facts that make Flashback a genuine
decomposition: the token sequence is an exact inverse of the string, and
it is the only token sequence consistent with the rules.  We then collect
the small set of invariants that subsequent sections will use repeatedly.

Throughout, let $s\in\Sigma^*$ with $|s|=n$ and
$T = \mathcal{F}(s) = [\tau_0,\ldots,\tau_{k-1}]$ with
$\tau_d = (\sigma_d,p_d)$; let $[\mathit{lo}_d,\mathit{hi}_d)$ be the
active span at depth~$d$.

\begin{remark}[Empty string]
\label{rem:empty}
For $s=\varepsilon$, $\hat{s}=\texttt{@\$}$ is a single span whose leading
and trailing runs cover it entirely, so \textsc{Peel} emits one terminal
token $\tau_0=(\texttt{@\$},0)$.  This is the only degenerate case; all
later statements assume $|s|\ge 1$ unless noted.
\end{remark}

\subsection{Reversibility and Uniqueness}

\begin{theorem}[Reversibility]
\label{thm:reversibility}
$\mathcal{F}^{-1}\bigl(\mathcal{F}(s)\bigr)=s$ for every $s\in\Sigma^*$.
\end{theorem}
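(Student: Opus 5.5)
The plan is to prove, by induction on the recursion depth, a stronger invariant relating \textsc{Peel} and \textsc{Nest}. Then we read off the theorem at depth~$0$.

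\emph{Invariant.} Suppose \textsc{Peel} is called on the span $[\mathit{lo}_d,\mathit{hi}_d)$ with $\mathit{lo}_d<\mathit{hi}_d$, and it appends tokens $\tau_d,\ldots,\tau_{k-1}$ to $T$. Then \textsc{Nest}$(T,d,B,\mathit{pos})$ writes exactly $\hat{s}[\mathit{lo}_d..\mathit{hi}_d)$ into $B[\mathit{pos}..\mathit{pos}+\mathit{hi}_d-\mathit{lo}_d)$ and returns $\mathit{hi}_d-\mathit{lo}_d$.

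Two preliminary facts come straight from the pseudocode of \textsc{Peel}.
\begin{enumerate}
\item Each call appends exactly one token and makes at most one recursive call.
\item The token is terminal ($p=0$) precisely when no recursive call is made.
\end{enumerate}
Hence the last token in $T$ is the one emitted by the deepest call, and tokens are emitted in depth order. A non-last token carries $p_d=\ell\ge 1$.

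The induction runs on $k-1-d$, i.e.\ from the innermost call outward.

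\emph{Base case} ($d=k-1$). The deepest call returned without recursing. This happens in one of two ways.
\begin{itemize}
\item The whole span is one run: $\sigma=\hat{s}[\mathit{lo}..\mathit{hi})$ verbatim.
\item The middle is empty ($\mathit{lo}+\ell\ge r$): then $\sigma=\hat{s}[\mathit{lo}..\mathit{lo}+\ell)\,\|\,\hat{s}[r..\mathit{hi})$.
\end{itemize}
The second case needs one check, and I expect it to be the only real subtlety. We must verify that $r=\mathit{lo}+\ell$ exactly, so that front and back neither overlap nor leave a gap. Since the leading run does not cover the span, $\hat{s}[\mathit{lo}+\ell]\ne\hat{s}[\mathit{lo}]$. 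Suppose $r<\mathit{lo}+\ell$. Then the position $\mathit{lo}+\ell-1$, which lies in the leading run, would also lie in the trailing run. That forces $\hat{s}[\mathit{lo}+\ell]=\hat{s}[\mathit{hi}-1]=\hat{s}[\mathit{lo}+\ell-1]=\hat{s}[\mathit{lo}]$, a contradiction. So the concatenation equals the span. \textsc{Nest} at the last index writes $\sigma$ verbatim and returns $|\sigma|$, which establishes the base case.

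\emph{Inductive step} ($d<k-1$). Here the token is $(\sigma_d,\ell)$, with $\sigma_d=\hat{s}[\mathit{lo}..\mathit{lo}+\ell)\,\|\,\hat{s}[r..\mathit{hi})$. The next span is $[\mathit{lo}+\ell,r)$, which is nonempty. \textsc{Nest} proceeds in three steps.
\begin{enumerate}
\item It writes $\sigma_d[1..\ell]=\hat{s}[\mathit{lo}..\mathit{lo}+\ell)$ at $\mathit{pos}$.
\item By the induction hypothesis, the recursive call writes $\hat{s}[\mathit{lo}+\ell..r)$ at $\mathit{pos}+\ell$ and returns $r-\mathit{lo}-\ell$.
\item It then writes $\sigma_d[\ell+1..]=\hat{s}[r..\mathit{hi})$ immediately after.
\end{enumerate}
Concatenating gives $\hat{s}[\mathit{lo}..\mathit{hi})$, and the return value sums to $\mathit{hi}-\mathit{lo}$. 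This closes the induction.

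\emph{Conclusion.} Apply the invariant at $d=0$ with $\mathit{pos}=1$ and span $[1,n+3)$. The top-level call is valid because $\hat{s}$ is nonempty, even when $s=\varepsilon$ (\Cref{rem:empty}). The buffer $B$ then holds $\hat{s}$ and $L=n+2$. The buffer size $\sum_d|\sigma_d|$ equals $n+2$, since the spans telescope, and this is consistent with \Cref{cor:total_symbol_length}. Returning $B[2..L-1]$ strips the two sentinels and yields $s$.

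The main obstacle is the base-case boundary check $r=\mathit{lo}+\ell$. Everything else is mechanical bookkeeping of indices between the two recursions.
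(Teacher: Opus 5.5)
Your proof is correct and takes essentially the same route as the paper's. Both argue by level-by-level induction that \textsc{Nest} rebuilds exactly the span that \textsc{Peel} decomposed, then strip the sentinels at depth~$0$. You differ only in inducting on depth from the innermost call rather than on span length $m=\mathit{hi}-\mathit{lo}$, and in explicitly checking $r=\mathit{lo}+\ell$ in the terminal case where two runs cover the span; the paper's base case (``$m\le 1$, or the span is a single run'') passes over that case silently, so your version is slightly more complete.
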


\begin{proof}
We prove the stronger local claim that \textsc{Peel} and \textsc{Nest} are
inverses level by level.  Induct on the span length
$m=\mathit{hi}-\mathit{lo}$: the tokens emitted by \textsc{Peel} on
$\hat{s}[\mathit{lo}..\mathit{hi})$ are exactly what \textsc{Nest} needs
to reconstruct that span.

\emph{Base} ($m\le 1$, or the span is a single run).  \textsc{Peel} emits
one terminal token $(\sigma,0)$ with $\sigma=\hat{s}[\mathit{lo}..\mathit{hi})$;
\textsc{Nest} returns $\sigma$ verbatim.

\emph{Inductive step.}  Let the leading run have length $\ell\ge 1$ and the
trailing run start at $r$, with $b=\mathit{hi}-r\ge 1$.  \textsc{Peel} emits
$\tau_d=(\sigma_d,\ell)$ with
$\sigma_d=\hat{s}[\mathit{lo}..\mathit{lo}{+}\ell)\,\|\,\hat{s}[r..\mathit{hi})$
and recurses on the middle $\hat{s}[\mathit{lo}{+}\ell..r)$, whose length
$m-\ell-b<m$ gives the inductive hypothesis.  \textsc{Nest} at depth~$d$
computes
\[
\sigma_d[1..\ell]\;\|\;\textsc{Nest}(T,d{+}1)\;\|\;
\sigma_d[\ell{+}1..|\sigma_d|]
=\hat{s}[\mathit{lo}..\mathit{hi}).
\]
Applying the claim to the full span and stripping sentinels yields $s$.
\end{proof}

\begin{theorem}[Uniqueness]
\label{thm:uniqueness}
$\mathcal{F}(s)$ is the unique sequence of bilateral tokens consistent with
the decomposition rules.
\end{theorem}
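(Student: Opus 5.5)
The plan is to make ``consistent with the decomposition rules'' precise and then show by induction that every choice in the process is forced. A token sequence $T'=[\tau'_0,\ldots,\tau'_{k'-1}]$ is \emph{consistent} with $s$ when there is a chain of spans $[\mathit{lo}'_0,\mathit{hi}'_0)\supset[\mathit{lo}'_1,\mathit{hi}'_1)\supset\cdots$ with $[\mathit{lo}'_0,\mathit{hi}'_0)=[1,|\hat{s}|{+}1)$ satisfying three rules:
\begin{enumerate}
\item each non-terminal $\tau'_d=(\sigma'_d,p'_d)$ has front $\sigma'_d[1..p'_d]$ equal to the \emph{maximal} leading run of the span and back equal to the \emph{maximal} trailing run;
\item the next span is exactly the middle left after removing these two runs;
\item the sequence stops with a terminal token precisely when that middle is empty, or when the whole span is one run.
\end{enumerate}
These are the rules \textsc{Peel} implements. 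The claim is then that any consistent $T'$ equals $\mathcal{F}(s)$.

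I will induct on the span length $m=\mathit{hi}-\mathit{lo}$, proving that the suffix of any consistent sequence starting at a given span coincides with what \textsc{Peel} emits on that span.
\begin{itemize}
\item \emph{Span empty or a single run.} The rules force a single terminal token whose symbol string is the whole span, which is exactly what \textsc{Peel} appends.
\item \emph{Otherwise.} By Definition~\ref{def:runs} the leading and trailing runs of a non-empty string are uniquely determined. So $\ell$ (the front length), the back run, $\sigma'_d$ and $p'_d$ are all determined by the span alone. If the middle is empty, the terminal rule forces $(\sigma,0)$. If it is non-empty, the token must be $(\sigma,\ell)$ with $\ell\ge1$, and the next span must be the middle $[\mathit{lo}+\ell,r)$, which is strictly shorter. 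The inductive hypothesis then forces the remaining tokens.
\end{itemize}
Applying the claim at depth~$0$ gives $T'=\mathcal{F}(s)$.

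One subtlety needs care: the front and back runs could have the same symbol and meet, as when the span is a single run. That case is excluded by the early return in \textsc{Peel}, and I must check that the rules impose the same convention. Otherwise a sequence that, say, splits a single run into a front and a back part might count as consistent.

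I expect the main obstacle to be exactly this: the precise formalisation of ``consistent with the decomposition rules.'' The proof is only as strong as that definition, so the definition must encode maximality of the runs and the stopping condition. A secondary route, which I would add as a remark, is uniqueness from the reconstruction side. By Theorem~\ref{thm:reversibility}, any token sequence that decodes to $s$ and satisfies maximality at each nesting level must, level by level, peel the same runs. The argument is the same induction, but run from the outside in over \textsc{Nest}.
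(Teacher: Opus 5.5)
Your proposal is correct and follows essentially the same argument as the paper: maximality of the leading and trailing runs forces $\sigma_d$, $p_d$ and the next span at every depth, and the strict shrinking of the span (\Cref{prop:nesting}) makes your induction (equivalently, termination) go through; your explicit definition of ``consistent'' just spells out what the paper leaves informal. One small slip: \textsc{Peel} appends nothing on an empty span rather than a terminal token, but that case never arises (the initial span has length $n+2\ge 2$ and \textsc{Peel} recurses only on a non-empty middle), so you should simply drop it from the base case.
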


\begin{proof}
At each depth, maximality fixes the leading and trailing runs uniquely, so
$\sigma_d$, $p_d$, and the middle boundaries are all forced.  The active span
strictly shrinks at every non-terminal depth (\Cref{prop:nesting}), so the
recursion terminates in finitely many steps, producing a unique finite
sequence.
\end{proof}

\subsection{Structural Invariants}

The following three invariants are used throughout the rest of the paper:
the active spans form a nested chain, the token symbol strings partition
the characters of $\hat{s}$, and the sentinel and terminal tokens occupy
fixed positions in $T$.  All three are straightforward consequences of the
algorithm, but stating them explicitly pays off later.

\begin{proposition}[Nested spans]
\label{prop:nesting}
The active spans form a strictly decreasing chain
$[\mathit{lo}_0,\mathit{hi}_0)\supsetneq[\mathit{lo}_1,\mathit{hi}_1)
\supsetneq\cdots$, with
$\mathit{lo}_{d+1}=\mathit{lo}_d+\ell_d$ and $\mathit{hi}_{d+1}=r_d$.
\end{proposition}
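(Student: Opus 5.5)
The plan is to read the recurrence straight off \textsc{Peel} and then obtain strictness from the fact that both peeled runs are non-empty. Let depth $d$ be non-terminal, so \textsc{Peel} at depth $d$ makes a recursive call. The only recursive call is $\textsc{Peel}(\hat{s},\mathit{mid\_lo},\mathit{mid\_hi},T)$ with $\mathit{mid\_lo}=\mathit{lo}_d+\ell_d$ and $\mathit{mid\_hi}=r_d$. By definition of the active span at depth $d{+}1$, this gives $\mathit{lo}_{d+1}=\mathit{lo}_d+\ell_d$ and $\mathit{hi}_{d+1}=r_d$. I will argue by induction on $d$ along the tail-recursive chain, so that every depth that occurs arises from exactly one such call.

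Next I will bound the loop outputs. The first loop starts at $\ell=1$ and only increments, so $\ell_d\ge 1$. The loop exits with $\mathit{lo}_d+\ell_d\le\mathit{hi}_d$. Since we did not return at the $\ell=\mathit{hi}-\mathit{lo}$ test, in fact $\mathit{lo}_d+\ell_d<\mathit{hi}_d$.

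The second loop starts at $r=\mathit{hi}_d-1$ and only decrements while $r>\mathit{lo}_d$. Hence $\mathit{lo}_d\le r_d\le \mathit{hi}_d-1$, so the trailing run has length $b_d=\mathit{hi}_d-r_d\ge 1$.

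Because depth $d$ is non-terminal, the branch taken is $\mathit{mid\_lo}<\mathit{mid\_hi}$, that is, $\mathit{lo}_d+\ell_d<r_d$. Combining these facts gives
\[
\mathit{lo}_d<\mathit{lo}_d+\ell_d=\mathit{lo}_{d+1}<\mathit{hi}_{d+1}=r_d<\mathit{hi}_d .
\]
So $[\mathit{lo}_{d+1},\mathit{hi}_{d+1})$ is a non-empty interval contained in $[\mathit{lo}_d,\mathit{hi}_d)$. It misses at least the two endpoints $\mathit{lo}_d$ and $\mathit{hi}_d-1$, which gives strict containment with length decrease $\ell_d+b_d\ge 2$.

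The main subtlety is making sure the inequality $r_d\le\mathit{hi}_d-1$ really holds. It does, because the loop never increases $r$. This is what keeps the back run non-empty even when it would otherwise collide with the front run; that case is excluded anyway by the terminal test.

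Finally, because the length strictly decreases and is a non-negative integer, the chain is finite. This also supplies the termination fact used in \Cref{thm:uniqueness}.
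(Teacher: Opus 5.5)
Your proposal is correct and takes the same approach as the paper. The paper's proof is the one-line observation that at every non-terminal depth $\ell_d \ge 1$ and $\mathit{hi}_d - r_d \ge 1$, so the middle is a proper subinterval. You additionally read the recurrence off the recursive call, bound the loop variables explicitly, and use the non-terminal branch condition $\mathit{lo}_d+\ell_d<r_d$ to show the new span is non-empty.
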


\begin{proof}
At each non-terminal depth, $\ell_d\ge 1$ and $\mathit{hi}_d-r_d\ge 1$, so
the new span is a proper subinterval of the current one.
\end{proof}

\begin{proposition}[Alphabet partition]
\label{thm:alphabet}
The token symbol strings form a multiset partition of $\hat{s}$:
$\biguplus_{d=0}^{k-1}\mathrm{chars}(\sigma_d)=\mathrm{chars}(\hat{s})$.
In particular,
\[
\sum_{d=0}^{k-1}|\sigma_d|=n+2.
\]
\label{cor:total_symbol_length}
\end{proposition}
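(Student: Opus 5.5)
The plan is to show that the symbol strings $\sigma_0,\ldots,\sigma_{k-1}$ are exactly the contents of pairwise disjoint index intervals whose union is $[1,n+3)$. Both the multiset identity and the length count then follow at once.

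I would introduce, for each non-terminal depth $d$, the two intervals $F_d=[\mathit{lo}_d,\mathit{lo}_d+\ell_d)$ and $B_d=[r_d,\mathit{hi}_d)$. By construction in \textsc{Peel}, $\sigma_d=\hat{s}[F_d]\,\|\,\hat{s}[B_d]$. For the terminal depth $k-1$ there are two cases. If the whole span is one run, the token is $\hat{s}[\mathit{lo}_{k-1}..\mathit{hi}_{k-1})$ itself. If instead the front and back runs cover the span, then $\sigma_{k-1}=\hat{s}[F_{k-1}]\,\|\,\hat{s}[B_{k-1}]$ with $\mathit{lo}_{k-1}+\ell_{k-1}\ge r_{k-1}$. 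I still have to check that $F_{k-1}$ and $B_{k-1}$ neither overlap nor leave a gap in this case; this is handled below. In either case $\sigma_{k-1}$ consists exactly of the characters at positions $[\mathit{lo}_{k-1},\mathit{hi}_{k-1})$, taken once each.

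Next I would establish the telescoping decomposition
\[
[\mathit{lo}_d,\mathit{hi}_d) = F_d \sqcup [\mathit{lo}_{d+1},\mathit{hi}_{d+1}) \sqcup B_d
\]
for every non-terminal $d$. This uses \Cref{prop:nesting}, namely $\mathit{lo}_{d+1}=\mathit{lo}_d+\ell_d$ and $\mathit{hi}_{d+1}=r_d$, together with the recursion condition $\mathit{mid\_lo}<\mathit{mid\_hi}$. That condition gives $\mathit{lo}_d+\ell_d<r_d$, so the three pieces are disjoint consecutive intervals. A downward induction on $d$ then shows that the multiset of characters in $\sigma_d,\ldots,\sigma_{k-1}$ equals $\mathrm{chars}(\hat{s}[\mathit{lo}_d..\mathit{hi}_d))$. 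Applying this at $d=0$, where the span is $[1,n+3)$, gives the partition. Taking cardinalities gives $\sum_d|\sigma_d|=|\hat{s}|=n+2$.

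The main obstacle is the covering terminal case, where the front and back runs could in principle overlap. That would double-count characters, making $|\sigma_{k-1}|$ larger than the span length. I would first show that the trailing-run loop in \textsc{Peel} stops at $r\ge \mathit{lo}+\ell$ once the span is not a single run. In that case $\hat{s}[\mathit{lo}+\ell-1]\ne\hat{s}[\mathit{lo}+\ell]$, because $\ell$ is maximal and $\ell<\mathit{hi}-\mathit{lo}$. The back run is monochromatic with symbol $\hat{s}[\mathit{hi}-1]$, so it cannot contain both positions $\mathit{lo}+\ell-1$ and $\mathit{lo}+\ell$, which forces $r\ge \mathit{lo}+\ell$. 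Combined with the covering condition $\mathit{lo}+\ell\ge r$, this gives $r=\mathit{lo}+\ell$ exactly. So $F_{k-1}$ and $B_{k-1}$ tile the span with no overlap and no gap. The same inequality $r\ge\mathit{lo}+\ell$ also justifies the disjointness used above, and it covers the degenerate case $s=\varepsilon$ of \Cref{rem:empty}.
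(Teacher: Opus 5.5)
Your proposal is correct and follows essentially the paper's route: split each active span into leading run, middle, and trailing run via \Cref{prop:nesting}, then induct over depths so every character of $\hat{s}$ lands in exactly one $\sigma_d$. Your extra check that $r=\mathit{lo}+\ell$ in the covering terminal case, so the front and back runs cannot overlap, makes explicit a point the paper's one-line proof leaves implicit.
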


\begin{proof}
At every depth, $\sigma_d$ contains the characters of the leading and
trailing runs; the middle, passed to the next level, is disjoint from both.
By \Cref{prop:nesting}, induction covers all characters exactly once.
\end{proof}

\begin{proposition}[Sentinel and terminal tokens]
\label{prop:sentinel}
For $|s|\ge 1$ the first token is $\tau_0=(\texttt{@\$},1)$, non-terminal
with split position~$1$.  A terminal token ($p=0$) can appear only as the
last element of $T$.
\label{lem:terminal}
\end{proposition}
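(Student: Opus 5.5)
The plan is to trace \textsc{Peel} directly on the first call and then argue structurally about where a $p=0$ token can be emitted.

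\emph{First claim.} At depth~$0$ the active span is all of $\hat{s}=\texttt{@}\cdot s\cdot\texttt{\$}$, of length $n+2\ge 3$ because $|s|\ge 1$. Since $\texttt{@}\notin\Sigma$ and $\texttt{@}\ne\texttt{\$}$, we have $\hat{s}[2]\ne\hat{s}[1]$, so the leading-run loop exits immediately with $\ell=1$. Then $\ell\neq\mathit{hi}-\mathit{lo}=n+2$, so the ``entire span is one run'' branch is skipped. By the symmetric argument for $\texttt{\$}$, the trailing-run loop exits with $r=\mathit{hi}-1$, and so $\sigma=\texttt{@\$}$. The middle is $[\mathit{lo}+1,\mathit{hi}-1)$, of length $n\ge 1$, so $\mathit{mid\_lo}<\mathit{mid\_hi}$. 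The algorithm therefore takes the \textbf{else} branch and appends $(\texttt{@\$},1)$, which is non-terminal with split position~$1$.

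\emph{Second claim.} I will inspect the only three places in \textsc{Peel} that append a token.
\begin{enumerate}
\item The ``entire span is one run'' line appends a token with $p=0$ and then returns without recursing.
\item The terminal branch $\mathit{mid\_lo}\ge\mathit{mid\_hi}$ likewise appends a token with $p=0$ and does not recurse.
\item The \textbf{else} branch appends $(\sigma,\ell)$ with $\ell\ge 1$, so its token is never terminal, and then recurses.
\end{enumerate}
Hence a call appends a $p=0$ token exactly when it makes no recursive call.

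Because each call appends exactly one token (the empty-span return never occurs below the top level, as noted next) and makes at most one recursive call as its final action, the tokens in $T$ are ordered by depth. The call that emits a terminal token is therefore the deepest call, and its token is the last element of $T$.

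\emph{Main obstacle.} The only subtlety is to make sure the empty-span early return does not break the ``one token per call'' correspondence. Recursion is invoked only when $\mathit{mid\_lo}<\mathit{mid\_hi}$, and the top-level span is nonempty, so every call has a nonempty span and appends exactly one token. With this in place, a short induction on depth (or the tail-recursive/loop view in \Cref{app:iterative_decompose}) finishes the argument. The same reasoning also shows that the last token is always terminal, since recursion stops only through one of the two $p=0$ branches.
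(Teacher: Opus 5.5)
Your proof is correct and takes the same route as the paper. At depth~$0$ both runs have length~$1$ because the sentinels differ from each other and from every symbol of $\Sigma$, and the middle is the non-empty $s$. A $p=0$ token is appended only on the two branches of \textsc{Peel} that do not recurse, so nothing can follow it; your version simply spells out the paper's two-line argument by tracing the loop bounds and checking each append site.
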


\begin{proof}
Since $\texttt{@}\ne\texttt{\$}$ and neither lies in $\Sigma$, both runs at
depth~0 have length~1; the middle is $s$, which is non-empty, so the token
is non-terminal with $\sigma_0=\texttt{@\$}$ and $p_0=1$.  A terminal token
is emitted only when \textsc{Peel} does not recurse, hence no further tokens
follow.
\end{proof}

\section{Complexity}
\label{sec:complexity}

\subsection{Time and Space}

\begin{theorem}[Linear time and space]
\label{thm:time_decompose}
\label{thm:time_reverse}
\label{thm:space}
Both $\mathcal{F}$ and $\mathcal{F}^{-1}$ run in $\Theta(n)$ time and use
$O(n)$ space in the word-RAM model.
\end{theorem}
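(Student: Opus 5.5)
The plan is a charging argument for the decomposition and a write-count argument for reconstruction, both resting on \Cref{prop:nesting} and \Cref{thm:alphabet}. For $\mathcal{F}$, the work at depth~$d$ has three parts: the leading-run scan, the trailing-run scan, and building $\sigma_d$. The leading-run loop executes $\ell_d$ successful comparisons plus at most one failing comparison. The trailing-run loop executes $b_d=\mathit{hi}_d-r_d$ successful comparisons plus at most one failing comparison. Building $\sigma_d$ costs $O(\ell_d+b_d)=O(|\sigma_d|)$. The work at depth~$d$ is therefore $O(|\sigma_d|+1)$.

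In the terminal cases the scan stops early, but it only touches characters that are placed into $\sigma_d$. So the same bound holds there: $O(|\sigma_d|+1)$. Summing over all depths and applying \Cref{thm:alphabet} gives a total of $O\bigl(\sum_d|\sigma_d|+k\bigr)=O(n+2+k)$.

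It remains to bound the number of tokens $k$. By \Cref{prop:nesting}, every non-terminal depth removes at least one character from each end, so $k\le (n+2)/2+1$. Hence $k=O(n)$ and the time for $\mathcal{F}$ is $O(n)$. The matching lower bound $\Omega(n)$ is immediate, because every character of $\hat{s}$ is written into some token.

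For $\mathcal{F}^{-1}$, the first step is to compute the buffer size $\sum_d|\sigma_d|$, which takes $O(k)$ time. Each call to \textsc{Nest} then writes $|\sigma_i|$ characters and does $O(1)$ further work. The total number of writes is $n+2$ by \Cref{cor:total_symbol_length}, and the number of calls is $k$. Since $k-1\le|\sigma_0|+\dots$, we have $k\le n+2$, so the whole reconstruction takes $\Theta(n)$ time. Two word-RAM assumptions are needed here: the stored integers $p_d\le n+2$ fit in a word, and the final copy $B[2..L-1]$ costs $O(n)$.

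For space, $\hat{s}$, $T$, and $B$ each occupy $O(n+k)=O(n)$ words. The one point needing care is the recursion stack: taken literally, \textsc{Peel} and \textsc{Nest} both recurse to depth $k=\Theta(n)$. For \textsc{Peel} this is harmless, because it is tail-recursive and can be run as the loop in \Cref{app:iterative_decompose}, giving $O(1)$ extra space. \textsc{Nest} is not tail-recursive, since it writes the back run after the recursive call. To handle it, I would first compute prefix sums of the front lengths $p_i$, which gives every front offset directly. A backward pass over the tokens then places each back run, using $\mathit{pos}_i+p_i+\mathit{inner\_len}_i$ with $\mathit{inner\_len}_i=\sum_{j>i}|\sigma_j|$. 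This removes the recursion and keeps auxiliary space $O(k)=O(n)$. If an explicit $O(k)$-deep stack is accepted in the model, the bound is $O(n)$ anyway.

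The main obstacle I expect is the $O(1)$ overhead per level: bounding $k$ by $O(n)$ and handling the empty-middle and early-terminal branches uniformly. That is why the cost is charged to $|\sigma_d|+1$, not to the span length. Charging to the span length would incorrectly suggest quadratic time.
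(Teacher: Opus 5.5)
Your proposal is correct and follows essentially the same route as the paper. Both charge the successful run-scan comparisons and token construction to the characters consumed into each $\sigma_d$, which total $n+2$ by \Cref{thm:alphabet}. Both absorb the $O(1)$ failing comparisons per level into $O(k)=O(n)$, count $n+2$ writes for \textsc{Nest}, and use the iterative form of \textsc{Peel} for space.

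Your remark that \textsc{Nest} is not tail-recursive is a detail the paper glosses over. As you note, its recursion depth is $k=O(n)$ with $O(1)$ per frame, so the $O(n)$ space bound holds either way.
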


\begin{proof}
\emph{Decomposition.}  At depth~$d$, scanning the leading and trailing runs
performs $\ell_d+b_d$ successful comparisons and at most two failing ones.
The $\ell_d+b_d$ matched characters are consumed into $\sigma_d$ and never
revisited.  By \Cref{thm:alphabet} their total is $n+2$; failing
comparisons sum to at most $2k\le n+O(1)$.  Token emission is
$O(|\sigma_d|)$, again summing to $n+2$.

\emph{Reconstruction.}  \textsc{Nest} writes each character directly to its
final position, totalling $n+2$ writes (\Cref{cor:total_symbol_length}),
with $O(1)$ bookkeeping per token.

\emph{Space.}  The sentinel-wrapped string, token list, and output buffer
are each $O(n)$.  Decomposition is tail-recursive and admits an iterative
rewrite in $O(1)$ auxiliary space (\Cref{app:iterative_decompose}).
\end{proof}

\subsection{Expected Token Count for Random Strings}
\label{sec:average}

\begin{theorem}[Expected token count]
\label{thm:expected_k}
Let $s$ have length $n\ge 2$ with each character drawn i.i.d.\ uniformly
from an alphabet of size $\sigma\ge 2$, and set $p=(\sigma-1)/\sigma$.  Then
$r(s)-1\sim\mathrm{Bin}(n{-}1,p)$ and
\[
\mathbb{E}[k] \;=\; 1 + \frac{1+(n-1)p}{2}
+ \frac{1+\bigl((2-\sigma)/\sigma\bigr)^{n-1}}{4}
\;=\; \frac{n(\sigma-1)}{2\sigma} + O(1).
\]
\end{theorem}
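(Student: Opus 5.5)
The plan is to reduce the statement to a computation about the number of runs $r=r(s)$. By \Cref{thm:run_pairing} (the run-pairing theorem, used already in the worked example of \Cref{sec:examples}), every non-empty $s$ has $k=1+\lceil r/2\rceil$. Since $n\ge 2$, the string is non-empty and the formula applies to every outcome. It then suffices to identify the law of $r$ and to compute $\mathbb{E}\lceil r/2\rceil$.

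\emph{Step 1: distribution of $r-1$.} Write $r-1=\sum_{i=1}^{n-1}X_i$, where $X_i=\mathbf{1}[s_i\ne s_{i+1}]$; each run boundary is exactly such an index. Conditioned on $s_1,\dots,s_i$, the character $s_{i+1}$ is uniform and independent of the past. Hence $\Pr[X_i=1\mid s_1,\dots,s_i]=(\sigma-1)/\sigma=p$, whatever the value of $s_i$. By the chain rule, $X_1,\dots,X_{n-1}$ are therefore i.i.d.\ Bernoulli$(p)$, which gives $r-1\sim\mathrm{Bin}(n-1,p)$. This independence argument is the only step needing care, because the $X_i$ are defined through overlapping pairs of characters. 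The point to stress is that the conditional probability does not depend on the conditioning value.

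\emph{Step 2: the ceiling.} Use $\lceil r/2\rceil = r/2+\tfrac12\mathbf{1}[r\text{ odd}]$. With $X=r-1$, linearity gives $\mathbb{E}[r]/2=(1+(n-1)p)/2$. Moreover $r$ is odd exactly when $X$ is even. The standard parity identity for binomials, obtained by evaluating $\mathbb{E}[(-1)^X]=(1-2p)^{n-1}$, gives
\[
\Pr[X\text{ even}]=\tfrac12\bigl(1+(1-2p)^{n-1}\bigr).
\]
Since $1-2p=(2-\sigma)/\sigma$, adding $1$ for the sentinel token yields exactly the stated closed form.

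\emph{Step 3: asymptotics.} Because $\sigma\ge 2$, we have $|(2-\sigma)/\sigma|\le 1$, so that last term lies in $[0,\tfrac12]$. The remaining part equals $1+\tfrac12+\tfrac{(n-1)p}{2}=\tfrac{np}{2}+O(1)$, which is $n(\sigma-1)/(2\sigma)+O(1)$.
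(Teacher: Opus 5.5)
Your proposal is correct and follows the paper's proof essentially step for step: you apply the run-pairing theorem to get $k=1+\lceil r/2\rceil$, show $r-1\sim\mathrm{Bin}(n-1,p)$ via the adjacent-difference indicators, split the ceiling by the parity of $r$, and apply the binomial parity identity $\Pr[X\text{ even}]=\tfrac12\bigl(1+(1-2p)^{n-1}\bigr)$. Your conditional-probability justification that the indicators are independent, and your use of $|(2-\sigma)/\sigma|\le 1$ (boundedness rather than decay) for the $O(1)$ term, are slightly more careful than the paper's one-line assertions, but the route is the same.
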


\begin{proof}
Adjacent characters differ independently with probability $p$, so
$X:=r-1\sim\mathrm{Bin}(n{-}1,p)$.  By \Cref{thm:run_pairing},
$k = 1+\lceil r/2\rceil = 1+\lceil(X+1)/2\rceil$ and
$\lceil(X+1)/2\rceil = (X+1+\mathbf{1}_{X\text{ even}})/2$.  For $X\sim
\mathrm{Bin}(n{-}1,p)$, $\Pr(X\text{ even}) = \tfrac{1}{2}(1+(1-2p)^{n-1})$,
and $1-2p=(2-\sigma)/\sigma$.  Substituting $\mathbb{E}[X]=(n{-}1)p$ gives
the formula; the asymptotic follows from $|(2-\sigma)/\sigma|<1$ for
$\sigma\ge 3$ (and vanishing for $\sigma=2$).
\end{proof}

The full distribution of $k$ is available in closed form since
$k = 1+\lceil(1+X)/2\rceil$; the exact variance and kernel-alphabet
statistics are recorded in \Cref{app:random_string_stats}.

\section{Structural Analysis}
\label{sec:structural}

The algorithm described so far is short, but its real content is
structural: Flashback acts on the run-level shape of a string, not on
individual characters.  Once we make this observation precise in the
\emph{run-pairing theorem} below, most other properties of the
decomposition follow almost mechanically: kernel size, palindromes, edit
locality, and the image of $\mathcal{F}$ are all corollaries of the same
pairing identity.  The rest of the section is organised around that
theorem: first its statement, then a matching lower bound, and finally
the downstream consequences.

\subsection{Flashback as Bilateral Run Pairing}

\begin{definition}[Maximal runs and RLE]
\label{def:rle}
For a non-empty string $s$, the \emph{run-length encoding} (RLE) is the
unique decomposition
\[
  s \;=\; a_1^{m_1}\, a_2^{m_2}\, \cdots\, a_r^{m_r},
\]
where each $a_i \in \Sigma$ is the symbol of the $i$-th run, each
$m_i \ge 1$ is its length, and consecutive symbols differ:
$a_i \ne a_{i+1}$.  The integer $r = r(s)$ is the number of maximal
runs of $s$.  Here $a^m$ abbreviates the string consisting of $a$
repeated $m$ times; for example $\mathtt{CASSAYFF}$ has RLE
$\mathtt{C}^1\, \mathtt{A}^1\, \mathtt{S}^2\, \mathtt{A}^1\, \mathtt{Y}^1\,
\mathtt{F}^2$ and $r = 6$.
\end{definition}

\begin{theorem}[Run-pairing theorem]
\label{thm:run_pairing}
Let $s$ have $r$ maximal runs, and write $h = \lceil r/2 \rceil$.  Then
$\mathcal{F}(s)$ has exactly
\[
  k \;=\; 1 + h
\]
tokens, arranged as follows.
\begin{itemize}
  \item \textbf{Sentinel token.} $\tau_0 = (\texttt{@\$},\; 1)$.
  \item \textbf{Content tokens.} For $1 \le j \le h - 1$, token $\tau_j$
        pairs the $j$-th run from the left with the $j$-th run from the
        right:
        \[
          \tau_j \;=\; \bigl(a_j^{m_j} \cdot a_{r+1-j}^{m_{r+1-j}},\; m_j\bigr).
        \]
  \item \textbf{Terminal token.} $\tau_{k-1}$ records the middle run, or
        the two middle runs if $r$ is even:
        \[
          \tau_{k-1} \;=\;
          \begin{cases}
            \bigl(a_h^{m_h},\; 0\bigr) & \text{if } r \text{ is odd}, \\[3pt]
            \bigl(a_h^{m_h} \cdot a_{h+1}^{m_{h+1}},\; 0\bigr)
                                       & \text{if } r \text{ is even}.
          \end{cases}
        \]
\end{itemize}
In particular, $k \le \lceil n/2 \rceil + 1$, with equality exactly when
every run has length $1$ (i.e.\ $r = n$).
\end{theorem}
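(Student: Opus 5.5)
We argue by induction on the peeling depth, tracking the active span in terms of runs of $s$. Write $s=a_1^{m_1}\cdots a_r^{m_r}$ with $r\ge 1$; the empty string is \Cref{rem:empty}. By \Cref{prop:sentinel}, $\tau_0=(\texttt{@\$},1)$, and by \Cref{prop:nesting} the depth-$1$ span is exactly $s$.

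The invariant to prove is the following. For $1\le j\le h$, the active span at depth~$j$ is the run-block
\[
s_j \;=\; a_j^{m_j}a_{j+1}^{m_{j+1}}\cdots a_{r+1-j}^{m_{r+1-j}},
\]
and its maximal runs are exactly these runs of $s$, unchanged. This holds because consecutive symbols differ, so no two adjacent runs of the block merge. The key point is that a contiguous block of whole runs of $s$ has its own RLE equal to the corresponding sub-list of runs. The block $s_j$ has $r_j=r+2-2j$ runs, and $r_j\ge 1$ holds precisely when $j\le h$.

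Now do the case analysis at depth $j$ inside \textsc{Peel}.
\begin{itemize}
\item If $r_j=1$ (so $r$ is odd and $j=h$), the leading run fills the whole span. The algorithm takes the $\ell=\mathit{hi}-\mathit{lo}$ branch and emits $(a_h^{m_h},0)$.
\item If $r_j=2$ (so $r$ is even and $j=h$), the leading run is $a_h^{m_h}$. The trailing scan stops at the start of $a_{h+1}^{m_{h+1}}$, since $a_h\ne a_{h+1}$. The middle is then empty, and the algorithm emits $(a_h^{m_h}a_{h+1}^{m_{h+1}},0)$.
\item If $r_j\ge 3$, the leading run is $a_j^{m_j}$ and the trailing run is $a_{r+1-j}^{m_{r+1-j}}$. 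These are distinct runs, so the trailing scan cannot cross into the leading run. The middle is non-empty and equals $s_{j+1}$. The algorithm emits $(a_j^{m_j}a_{r+1-j}^{m_{r+1-j}},m_j)$, and the invariant passes to $j+1$.
\end{itemize}
This yields tokens $\tau_1,\dots,\tau_h$ of exactly the stated form, so $k=1+h$.

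The main obstacle is the step asserting that the leading and trailing runs of the active span are whole runs of $s$. One must check that the scans in \textsc{Peel} stop exactly at run boundaries and do not overlap. The leading scan stops at the first index where the symbol changes, which is the boundary between $a_j$ and $a_{j+1}$. The trailing scan is bounded by $r>\mathit{lo}$, and stops at the boundary between runs $r-j$ and $r+1-j$ when $r_j\ge 2$. Since these two runs are distinct whenever $r_j\ge 2$, the $\ell=\mathit{hi}-\mathit{lo}$ branch never triggers in that case. I would state this as a small lemma about run-blocks before running the induction.

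Finally, for the bound $k\le\lceil n/2\rceil+1$: since every $m_i\ge 1$, we have $r\le n$, so $\lceil r/2\rceil\le\lceil n/2\rceil$. Equality in the $k$-bound means $\lceil r/2\rceil=\lceil n/2\rceil$. If all runs have length $1$, then $r=n$ and equality holds. Conversely, if some run has length at least $2$, then $r\le n-1$, which already forces strict inequality when $n$ is even. The odd case $r=n-1$ with $n$ odd gives $\lceil r/2\rceil=(n-1)/2<(n+1)/2$, so equality fails there too. Hence equality holds exactly when $r=n$.
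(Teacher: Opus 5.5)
\paragraph{Main part.} Your derivation of the token count and the token shapes is correct. It is essentially the paper's argument. The paper inducts on $r$ for a bare span: it peels runs $1$ and $r$ and recurses on a block with $r-2$ runs. You phrase the same recursion as an invariant on the depth-$j$ span $s_j$, which has $r+2-2j$ runs. Your explicit check that the two scans in \textsc{Peel} stop exactly at run boundaries and cannot overlap is a useful addition that the paper leaves implicit.

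\paragraph{The gap is in your final paragraph.} You claim that if some run has length at least $2$, then $r\le n-1$ ``already forces strict inequality when $n$ is even.'' This is backwards. For even $n$ and $r=n-1$ (which is odd), $\lceil (n-1)/2\rceil = n/2 = \lceil n/2\rceil$, so equality holds.
\begin{itemize}
\item $s=\texttt{AA}$ has $n=2$, $r=1$ and $k=2=\lceil 2/2\rceil+1$. This agrees with the paper's own remark in \Cref{cor:var_k} that every length-$2$ string has $k=2$.
\item $s=\texttt{AABC}$ has $n=4$, $r=3$ and $k=3=\lceil 4/2\rceil+1$.
\end{itemize}
Your odd-$n$ case is fine: there $r\le n-1$ gives $\lceil r/2\rceil\le (n-1)/2<\lceil n/2\rceil$.

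So the clause ``equality exactly when $r=n$'' is false as stated and cannot be proved. The paper's proof simply asserts it without argument, and your converse step is where that false claim would have to be justified. The correct statement is that equality holds iff $\lceil r/2\rceil=\lceil n/2\rceil$. That means $r=n$, or $n$ is even and $r=n-1$ (exactly one run of length $2$, all others of length $1$). You should correct the claim rather than try to prove it.
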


\begin{proof}
The sentinels $\texttt{@}$ and $\texttt{\$}$ are distinct and lie outside
$\Sigma$, so at depth $0$ they always peel as $\tau_0 = (\texttt{@\$}, 1)$,
with the middle being $s$ itself.  It therefore suffices to count the
tokens \textsc{Peel} emits on a bare (unsentineled) span with $r$ runs,
and show this equals $h$ with the claimed shape.

We proceed by induction on $r$.  Since $a_i \ne a_{i+1}$ in any RLE, the
leading run of the span is always exactly $a_1^{m_1}$ and the trailing
run is exactly $a_r^{m_r}$.

\emph{Base cases.}
\begin{itemize}
  \item $r = 1$: the span is one run, so \textsc{Peel} emits a single
        terminal token $\tau_{k-1} = (a_1^{m_1},\, 0)$.  Count: $h = 1$.
  \item $r = 2$: the leading and trailing runs are adjacent and cover the
        entire span, so \textsc{Peel} emits the terminal token
        $\tau_{k-1} = (a_1^{m_1} a_2^{m_2},\, 0)$.  Count: $h = 1$.
\end{itemize}

\emph{Inductive step} ($r \ge 3$).  \textsc{Peel} emits one content token
$\tau_1 = (a_1^{m_1} a_r^{m_r},\, m_1)$ and recurses on the middle, which
is the RLE $a_2^{m_2} \cdots a_{r-1}^{m_{r-1}}$ with $r - 2$ runs and the
alternation property preserved.  By the inductive hypothesis the
recursion contributes $\lceil (r-2)/2 \rceil$ tokens, so the total is
\[
  1 \;+\; \lceil (r-2)/2 \rceil \;=\; \lceil r/2 \rceil \;=\; h.
\]
Each inductive level pairs the current outermost surviving runs, so the
$j$-th content token records runs $j$ and $r+1-j$ exactly as claimed.

\emph{Bound.}  Since every run has length $\ge 1$, $r \le n$, giving
$k \le 1 + \lceil n/2 \rceil$; equality holds iff every run has length
exactly~$1$.
\end{proof}

\Cref{fig:pairing} shows the pairing for \texttt{CASSAYFF}: runs~$1$
and~$6$ form the outer arc and become token~$\tau_1$, runs~$2$ and~$5$
form the inner arc and become $\tau_2$, and the two middle runs are
unpaired, together forming the kernel~$\tau_3$.  The picture also makes
\Cref{fig:example} easier to read: the peeling at successive depths
reads off the arcs from outside in.

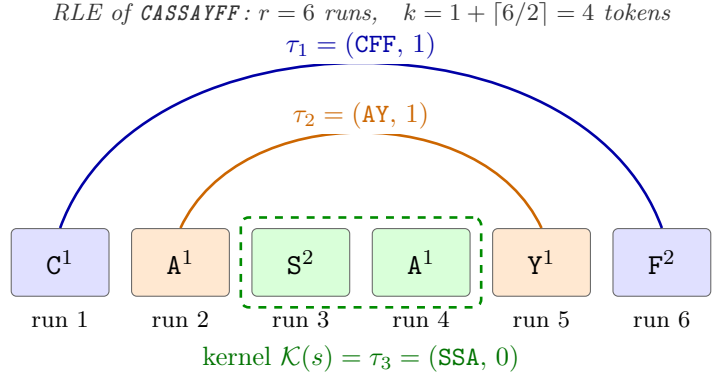
\begin{figure}[ht]
\centering
\begin{tikzpicture}[
  run/.style={
    draw=black!60,
    rounded corners=2pt,
    minimum width=1.3cm,
    minimum height=0.9cm,
    font=\ttfamily,
    line width=0.4pt
  },
  arc/.style={draw, line width=1pt},
  lbl/.style={font=\small, fill=white, inner sep=2pt}
]
\node[run, fill=blue!12]    (r1) at (0,   0) {$\mathtt{C}^{1}$};
\node[run, fill=orange!18]  (r2) at (1.6, 0) {$\mathtt{A}^{1}$};
\node[run, fill=green!15]   (r3) at (3.2, 0) {$\mathtt{S}^{2}$};
\node[run, fill=green!15]   (r4) at (4.8, 0) {$\mathtt{A}^{1}$};
\node[run, fill=orange!18]  (r5) at (6.4, 0) {$\mathtt{Y}^{1}$};
\node[run, fill=blue!12]    (r6) at (8.0, 0) {$\mathtt{F}^{2}$};

\foreach \i/\x in {1/0, 2/1.6, 3/3.2, 4/4.8, 5/6.4, 6/8.0}
  \node[font=\footnotesize, below=2pt] at (\x, -0.45) {run $\i$};

\draw[arc, blue!65!black]
  (r1.north) to[out=70, in=110]
  node[lbl, pos=0.5, above=-1pt] {$\tau_1 = (\mathtt{CFF},\, 1)$}
  (r6.north);

\draw[arc, orange!80!black]
  (r2.north) to[out=65, in=115]
  node[lbl, pos=0.5, above=-1pt] {$\tau_2 = (\mathtt{AY},\, 1)$}
  (r5.north);

\begin{scope}[on background layer]
  \node[fit=(r3)(r4), inner sep=4pt,
        draw=green!55!black, line width=1pt,
        dashed, rounded corners=3pt] (K) {};
\end{scope}
\node[below=0.35cm of K.south, font=\small, green!45!black]
  {kernel $\mathcal{K}(s) = \tau_3 = (\mathtt{SSA},\, 0)$};

\node[font=\footnotesize\itshape, black!80] at (4.0, 3.3)
  {RLE of \texttt{CASSAYFF}\,:\ $r=6$ runs,\quad $k = 1 + \lceil 6/2 \rceil = 4$ tokens};
\end{tikzpicture}
\caption{Run-pairing view of \texttt{CASSAYFF}.  Each content token
corresponds to one arc pairing two runs of the RLE (run~$j$ with
run~$r+1-j$); the middle run or two middle runs form the peeling kernel.
The token count $k = 1 + \lceil r/2 \rceil$ of \Cref{thm:run_pairing}
counts the sentinel plus the arcs.}
\label{fig:pairing}
\end{figure}

\paragraph{Immediate consequences.}
Three natural string transformations act predictably on the token
sequence, as one-line corollaries of \Cref{thm:run_pairing}.
\emph{Reversal:} $\mathcal{F}(s^R)$ has the same number of tokens as
$\mathcal{F}(s)$, with each content token's front and back swapped and
the terminal replaced by the reversed kernel $\mathcal{K}(s)^R$.
\emph{Uniform run-length dilation} by the morphism $h_c(a) = a^c$
($c \ge 1$): every $m_i$ scales by $c$, so every split position scales
by $c$ and every symbol string $\sigma_j$ is replaced by
$h_c(\sigma_j)$.  \emph{Tandem repetition} $s = w^m$ with $r(w) \ge 2$
and the first and last runs of $w$ distinct: no boundary runs merge,
and the non-terminal tokens are periodic with period $r(w)$.

\subsection{Optimality of Maximal Peeling}

Flashback always peels the entire boundary run on each side.  A natural
question is whether peeling less aggressively (taking only part of the
front or back run at some level) could reduce the token count.  It cannot:
the greedy choice is in fact globally optimal.  To make the question precise
we first say what "admissible" means.

\begin{definition}[Admissible bilateral run-peeling]
\label{def:admissible}
An \emph{admissible bilateral run-peeling decomposition} of a non-empty
string $s$ is built by applying the following rule to an active span until
the span is consumed.
\begin{itemize}
  \item \emph{Termination.}  If the active span consists of a single run or
        two adjacent runs, emit one terminal token for the span and stop.
  \item \emph{Peeling step.}  Otherwise let $L$ be the length of the span's
        leading run and $R$ the length of its trailing run.  Choose any
        integers $x \in \{1, \ldots, L\}$ and $y \in \{1, \ldots, R\}$,
        emit the token formed by the peeled prefix of length $x$ and the
        peeled suffix of length $y$, and recurse on the remaining middle.
\end{itemize}
Flashback is the special case $x = L$, $y = R$ at every step: the two
boundary runs are always consumed in full.
\end{definition}

\begin{theorem}[Greedy optimality]
\label{thm:optimality}
Every admissible bilateral run-peeling decomposition of $s$ uses at least
$\lceil r/2 \rceil$ content tokens, and Flashback attains this minimum.
\end{theorem}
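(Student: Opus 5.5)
The plan is to track a single potential, the number of maximal runs in the current active span, and show that each token reduces it by at most two. Throughout, the count of content tokens is the number of tokens emitted on the bare string $s$. The sentinel token $\tau_0$ is excluded, and the final terminal token is included. This is the count that equals $h=\lceil r/2\rceil$ in \Cref{thm:run_pairing}, so attainment by Flashback is immediate from that theorem. All the work is in the lower bound.

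\textbf{Step 1 (one peeling step).} Fix an admissible decomposition and a peeling step applied to a span with $r'\ge 3$ runs, leading run length $L$ and trailing run length $R$. Since $r'\ge 3$, the leading and trailing runs are distinct and non-adjacent, so the middle is non-empty and contains at least the second run. If $x<L$, the middle still begins with a nonempty remnant of the same symbol, so the first run survives, shortened. If $x=L$, exactly that run disappears. The same holds on the right. Removing a prefix and a suffix never merges two runs of what remains, because adjacency inside the middle is unchanged. Hence the middle has at least $r'-2$ runs, with equality exactly when $x=L$ and $y=R$.

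\textbf{Step 2 (counting).} Let the decomposition perform $t$ peeling steps before terminating. By induction using Step~1, the span at termination has at least $r-2t$ runs. The termination rule requires that span to have at most $2$ runs, so $r-2t\le 2$, which gives $t\ge \lceil (r-2)/2\rceil$ when $r\ge 3$. Adding the terminal token gives at least $t+1\ge\lceil r/2\rceil$ tokens. For $r\le 2$, termination happens immediately with one token, and $1=\lceil r/2\rceil$. This establishes the bound.

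\textbf{Main obstacle.} The only delicate point is the claim in Step~1 that partial peeling cannot decrease the run count by more than the number of fully consumed boundary runs. In particular, we must rule out that a partial peel causes runs to merge or vanish unexpectedly. I would handle this by writing the span's RLE explicitly as $a_1^{m_1}\cdots a_{r'}^{m_{r'}}$ and the middle as $a_1^{m_1-x}a_2^{m_2}\cdots a_{r'-1}^{m_{r'-1}}a_{r'}^{m_{r'}-y}$, dropping a factor when its exponent is zero. This is a valid RLE, since consecutive symbols still differ, so its run count can be read off directly.
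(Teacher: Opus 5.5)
Your proof is correct and is essentially the paper's argument: the paper tracks the number $N$ of original RLE runs that the active span still intersects, which is the same as the number of maximal runs of the span. It shows that each peeling step lowers $N$ by at most $2$ and that $N\le 2$ at termination, then concludes $t\ge\lceil r/2\rceil$ by the same arithmetic. Your explicit middle RLE $a_1^{m_1-x}a_2^{m_2}\cdots a_{r'-1}^{m_{r'-1}}a_{r'}^{m_{r'}-y}$ is a slightly more careful version of the paper's Step~1, and its equality case ($x=L$, $y=R$) is exactly what the paper uses for \Cref{cor:optimality_even}.
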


\begin{proof}
The idea is to track how many of the original RLE runs of $s$ the active
span still intersects as peeling proceeds.  Let $N$ denote that count;
initially $N = r$.

\emph{Step 1: each non-terminal step decreases $N$ by at most~$2$.}
A peeling step removes a prefix of length $x \ge 1$ from the current
leading run and a suffix of length $y \ge 1$ from the current trailing run.
Advancing the left endpoint of the span by $x$ positions moves it across
at most one original run boundary, so at most one original run falls out
of the span on the left; symmetrically on the right.

\emph{Step 2: at termination, $N \le 2$.}  The termination rule fires
only when the active span is one run or two adjacent runs, hence
intersects at most two original RLE runs.

\emph{Step 3: arithmetic.}  Suppose the decomposition emits $t$ content
tokens in total.  The last is the terminal token, so there are exactly
$t - 1$ non-terminal steps.  By Steps~1 and~2,
\[
  r \;-\; 2(t - 1) \;\le\; N_{\text{final}} \;\le\; 2,
\]
which rearranges to $t \ge 1 + (r - 2)/2$.  Since $t$ is an integer this
gives $t \ge \lceil r/2 \rceil$.

By \Cref{thm:run_pairing}, Flashback produces exactly $\lceil r/2 \rceil$
content tokens, so the lower bound is tight.
\end{proof}

\begin{corollary}[Even-run uniqueness]
\label{cor:optimality_even}
If $r$ is even, Flashback is the unique token-minimal admissible
decomposition.
\end{corollary}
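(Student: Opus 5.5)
The plan is to reuse the counting argument from the proof of \Cref{thm:optimality} and show that, when $r$ is even, equality in that argument leaves no freedom at any step. Fix an admissible decomposition of $s$ that uses exactly $t=\lceil r/2\rceil = r/2$ content tokens, so it makes $t-1$ non-terminal steps. As in that proof, let $N$ be the number of original RLE runs of $s$ that the active span intersects. Step~1 there gives $N_{\text{final}} \ge r-2(t-1) = 2$, and Step~2 gives $N_{\text{final}}\le 2$. Hence two things are forced: every non-terminal step decreases $N$ by \emph{exactly}~$2$, and the final span meets exactly two original runs.

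Next I translate ``$N$ drops by exactly $2$'' into a statement about $x$ and $y$. Advancing the left endpoint by $x$ removes an original run from the span only if the endpoint moves past the end of the original run currently containing it. Since $x$ cannot exceed the span's leading run, it crosses at most one boundary. So the left side contributes a drop of $1$ exactly when the peeled prefix reaches the end of the original run containing the left endpoint. The right side behaves symmetrically. A drop of exactly $2$ therefore means both peeled pieces end exactly at original run boundaries.

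The key step is an induction on the depth with the following hypothesis: the active span is $a_j^{m_j}\cdots a_{r+1-j}^{m_{r+1-j}}$, i.e.\ it starts and ends at original run boundaries, which is exactly Flashback's span at that depth by \Cref{thm:run_pairing}. This holds at the start, with $j=1$. Because adjacent symbols in the RLE differ, the span's leading run is exactly $a_j^{m_j}$, so $L=m_j$. The previous paragraph then forces $x=m_j=L$, and likewise $y=m_{r+1-j}=R$. The emitted token is therefore $\bigl(a_j^{m_j}a_{r+1-j}^{m_{r+1-j}},\,m_j\bigr)$, matching Flashback's $\tau_j$, and the new span again has the inductive form with $j+1$. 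The one point needing care is that I must use the inductive hypothesis that the span begins at a run boundary. Without it, the leading run could be a proper suffix of an original run, and then ``reaching the run's end'' would no longer mean $x=L$ in the same sense.

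Finally, the termination step. After $t-1=r/2-1$ steps the span is $a_{h}^{m_h}a_{h+1}^{m_{h+1}}$, which consists of two adjacent runs. The admissible rule must therefore emit the same terminal token as Flashback. Termination cannot fire earlier either, since at intermediate depths the span has at least $4$ runs, an even number. Hence the whole token sequence coincides with $\mathcal{F}(s)$ minus the sentinel token. I expect the main obstacle to be keeping the induction hypothesis about run-boundary alignment explicit. Note also why the argument needs $r$ even: for odd $r$ the slack $N_{\text{final}}\in\{1,2\}$ allows one partial peel, which is exactly why uniqueness is claimed only for even $r$.
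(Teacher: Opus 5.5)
Your proof is correct and follows the paper's argument. Equality in $r-2(t-1)\le N_{\text{final}}\le 2$ forces every non-terminal step to lower $N$ by exactly $2$, hence $x=L$ and $y=R$ at every step. Your explicit induction on run-boundary alignment spells out what the paper compresses into ``peel both boundary runs completely.''

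One small correction: $x=L$ is forced even without the alignment hypothesis. In any peeling step the span's leading run extends to the end of the original run containing the left endpoint. The alignment hypothesis is still useful, because it is what identifies each emitted token with $\tau_j$.
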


\begin{proof}
Write $r = 2m$.  A token-minimal decomposition uses $m$ content tokens
and hence $m - 1$ non-terminal steps.  The inequality $r - 2(t - 1) \le 2$
from the proof of \Cref{thm:optimality} is then an equality, so every
non-terminal step must reduce $N$ by exactly $2$.  This forces each step
to remove an original run from each side, i.e.\ to peel both boundary
runs completely, which is the Flashback rule.
\end{proof}

\subsection{The Peeling Kernel}

After all the pairing is done, one small piece is left over: the terminal
token's symbol string.  This residue is what remains once bilateral peeling
can no longer make progress, and we call it the \emph{peeling kernel}.

\begin{definition}[Peeling kernel]
\label{def:kernel}
The \emph{peeling kernel} of $s$ is $\mathcal{K}(s) = \sigma_{k-1}$, the
terminal token's symbol string.
\end{definition}

\begin{corollary}[Kernel structure]
\label{thm:kernel_alphabet}
With $h = \lceil r/2 \rceil$ as in \Cref{thm:run_pairing}, the peeling
kernel is
\[
  \mathcal{K}(s) \;=\;
  \begin{cases}
    a_h^{m_h} & \text{if } r \text{ is odd,} \\[3pt]
    a_h^{m_h} \cdot a_{h+1}^{m_{h+1}} & \text{if } r \text{ is even.}
  \end{cases}
\]
In either case $|\mathrm{alphabet}(\mathcal{K})| \le 2$.
\end{corollary}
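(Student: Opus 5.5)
This corollary is a direct reading-off from \Cref{thm:run_pairing}, so the plan is to unpack the definition of the kernel and quote the terminal-token clause of that theorem. By \Cref{def:kernel}, $\mathcal{K}(s)=\sigma_{k-1}$ is the symbol string of the last token. \Cref{thm:run_pairing} gives $k=1+h$ and states explicitly that $\tau_{k-1}=(a_h^{m_h},0)$ when $r$ is odd and $\tau_{k-1}=(a_h^{m_h}a_{h+1}^{m_{h+1}},0)$ when $r$ is even. Taking the first component of $\tau_{k-1}$ yields the displayed case formula immediately.

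For the alphabet bound, I would split into the same two cases. If $r$ is odd, $\mathcal{K}(s)=a_h^{m_h}$ is a power of a single symbol with $m_h\ge 1$, so its alphabet is exactly $\{a_h\}$. If $r$ is even, $\mathcal{K}(s)$ is a concatenation of a power of $a_h$ and a power of $a_{h+1}$, so its alphabet is $\{a_h,a_{h+1}\}$. This set has size exactly $2$, since consecutive RLE symbols differ (\Cref{def:rle}). In both cases the alphabet size is at most $2$.

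The only point needing care is the edge behaviour. For $|s|\ge 1$ we have $r\ge 1$ and $h\ge 1$, so the indices $h$ and $h+1\le r$ (the latter in the even case) are valid, and the sentinel token $\tau_0$ is distinct from the terminal token because $k\ge 2$. This matches the paper's standing assumption from \Cref{rem:empty}. For $s=\varepsilon$ the kernel is $\texttt{@\$}$, which also has two symbols, though these are sentinels rather than letters of $\Sigma$; I would note this in a remark rather than fold it into the statement. I do not expect a genuine obstacle here: all of the content is carried by the inductive proof of \Cref{thm:run_pairing}, in particular its base cases $r=1,2$, which fix the shape of the terminal token.
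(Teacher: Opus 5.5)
Your proposal is correct and follows the paper's proof, which also reads $\mathcal{K}(s)=\sigma_{k-1}$ directly off the terminal-token clause of \Cref{thm:run_pairing} and notes that each case involves at most two symbols. Your added checks are sound extras rather than required steps: that the indices $h$ and $h+1$ are valid, that $a_h\ne a_{h+1}$, and the empty-string remark.
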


\begin{proof}
The formula for $\mathcal{K}(s)$ is the symbol string of the terminal
token $\tau_{k-1}$ in \Cref{thm:run_pairing}.  Each case involves at
most two distinct symbols.
\end{proof}

\paragraph{Palindromes.}
Flashback recovers a classical fact about run-length encoding: a string
$s$ is a palindrome if and only if $r$ is odd and its RLE is symmetric
($a_j = a_{r+1-j}$ and $m_j = m_{r+1-j}$ for $1 \le j \le \lfloor r/2 \rfloor$).
The equivalence follows from RLE uniqueness alone and does not use the
run-pairing theorem; we mention it because in Flashback terms it has a
clean form: $\mathcal{K}(s)$ is a single run and every non-terminal
token has matching front and back symbols with equal-length pairs.

\subsection{Edit Locality}

Because $k$ is a function of $r$ alone and each content token depends
only on its paired pair of runs, edits have sharply local effects.  The
cleanest statement concerns edits that preserve the RLE skeleton,
i.e.\ changes to run lengths that do not alter run symbols.

\begin{proposition}[Run-length locality]
\label{prop:run_length_locality}
Let $s = a_1^{m_1} \cdots a_r^{m_r}$ and $t = a_1^{m'_1} \cdots a_r^{m'_r}$
have the same run symbols $a_1, \ldots, a_r$ (possibly with different
multiplicities), and let
\[
  D \;=\; \{\, i : m_i \ne m'_i \,\}.
\]
Then $\mathcal{F}(s)$ and $\mathcal{F}(t)$ differ at exactly the
content-token depths
\[
  \{\, \min(i,\, r + 1 - i) : i \in D \,\}.
\]
In particular, changing a single run length $m_i$ rewrites exactly one
content token, at depth $\min(i,\, r + 1 - i)$.
\label{cor:single_run_locality}
\end{proposition}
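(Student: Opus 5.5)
The plan is to apply \Cref{thm:run_pairing} to $s$ and $t$ separately and compare the two outputs token by token. Both strings have the same number of runs $r$, so they share the same $h=\lceil r/2\rceil$ and the same token count $k=1+h$. Their sentinel tokens $\tau_0=(\texttt{@\$},1)$ coincide. For $1\le j\le h$, the token at depth $j$ is determined by the run data at positions $j$ and $r+1-j$, and by nothing else. This holds for content tokens ($j\le h-1$) and equally for the terminal token ($j=h=k-1$), which reads run $h$ alone when $r$ is odd and runs $h,h+1=r+1-h$ when $r$ is even. I will therefore read the statement's ``content-token depths'' as the depths $1,\dots,h$, terminal included. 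The map $i\mapsto\min(i,r+1-i)$ sends each run index to the unique depth $j\in\{1,\dots,h\}$ whose token reads run $i$.

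The proof then reduces to one equivalence for each depth $j$: $\tau_j(s)\ne\tau_j(t)$ if and only if $\{j,r+1-j\}\cap D\ne\emptyset$. Taking the union over $j$ gives exactly the depth set $\{\min(i,r+1-i):i\in D\}$.

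The direction ``no changed run at $j$ implies the same token'' is immediate from the explicit formulas in \Cref{thm:run_pairing}. The other direction needs injectivity: the token at depth $j$ must determine the lengths of the runs it records. I will check the three token shapes in turn.
\begin{itemize}
  \item \emph{Content token} $(a_j^{m_j}a_{r+1-j}^{m_{r+1-j}},\,m_j)$. The split position gives $m_j$ directly, and $|\sigma_j|-m_j$ gives $m_{r+1-j}$. This works even when $a_j=a_{r+1-j}$, which is the case where the symbol string alone would be ambiguous.
  \item \emph{Terminal token, $r$ odd.} It is $(a_h^{m_h},0)$ with $j=r+1-j=h$, so its length is $m_h$.
  \item \emph{Terminal token, $r$ even.} It is $(a_h^{m_h}a_{h+1}^{m_{h+1}},0)$. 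The split is $0$, so it carries no information. However, $a_h\ne a_{h+1}$ by the RLE alternation property, so $m_h$ is the length of the leading run of $\sigma_{k-1}$ and $m_{h+1}$ is the remainder.
\end{itemize}

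I expect the main obstacle to be this last injectivity case. There the split position is unavailable, and the argument has to rely on alternation of adjacent run symbols, which holds for $s$ and $t$ because they share run symbols. I would also check the degenerate case $r=1$: there $h=1$, there are no content tokens, and the claim says the terminal token changes exactly when $m_1$ does, which the odd case covers. The final clause follows by taking $|D|=1$: the single index $i$ yields exactly one depth, $\min(i,r+1-i)$.
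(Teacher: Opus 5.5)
Your proposal is correct and follows the paper's route. Both apply \Cref{thm:run_pairing} to $s$ and $t$, observe that the depth-$j$ token (terminal included) depends only on runs $j$ and $r+1-j$, and send run $i$ to depth $\min(i,r+1-i)$. Your explicit injectivity check also supplies the converse direction (a changed length forces a changed token) that the paper passes over with a single ``consequently''; it uses the split position for content tokens and RLE alternation for the even-$r$ terminal.
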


\begin{proof}
By \Cref{thm:run_pairing}, the content token at depth $d$ pairs runs $d$
and $r + 1 - d$ (for non-terminal depths), and the terminal token
records the middle run or two middle runs.  In both cases the token is
determined entirely by the lengths of its paired runs.  Consequently,
$\tau_d \ne \tau'_d$ if and only if at least one paired run has a
different length in $s$ and $t$.

It remains to identify the depth at which run $i$ is paired.  Run $i$
shares a token with run $r + 1 - i$, and that token sits at the smaller
of the two indices, namely $\min(i,\, r + 1 - i)$.  Hence $\tau_d \ne
\tau'_d$ iff $d = \min(i,\, r + 1 - i)$ for some $i \in D$, giving
exactly the displayed set.
\end{proof}

Character-level edits are weaker but still bounded.  A single insertion,
deletion, or substitution changes at most two run boundaries (merging or
creating them), so $|\Delta r|\le 2$ and, since $k=1+\lceil r/2\rceil$,
$|\Delta k|\le 1$.  The token \emph{count} is therefore stable under
local character edits, though individual token \emph{contents} may shift
at several depths because the deeper active spans realign.

\subsection{Image Characterisation}

We close by identifying which token sequences actually arise.  The
constraints are all local: each content token is a pair of runs, the
terminal is one or two runs, and consecutive depths must alternate
symbols on each side.  These conditions can be checked in a single pass,
so membership in $\mathrm{Im}(\mathcal{F})$ is decidable in linear time.

\begin{theorem}[Valid token sequences]
\label{thm:image}
Let $T = [(\texttt{@\$}, 1),\, (\sigma_1, p_1),\, \ldots,\, (\sigma_{k-1}, p_{k-1})]$
be a candidate token sequence with $k \ge 2$ and all non-sentinel
characters drawn from $\Sigma$.  For each non-terminal depth
$1 \le d \le k-2$, write
\[
  a_d \;=\; \text{front symbol of $\tau_d$} \;=\; \sigma_d[1],
  \qquad
  c_d \;=\; \text{back symbol of $\tau_d$} \;=\; \sigma_d[|\sigma_d|].
\]
Then $T$ lies in the image of $\mathcal{F}$ if and only if the following
conditions hold.
\begin{enumerate}
  \item \emph{Token form.}  For each non-terminal depth $1 \le d \le k-2$:
        \begin{itemize}
          \item $p_d \ge 1$ and $|\sigma_d| - p_d \ge 1$ (both the front
                part $\sigma_d[1..p_d]$ and the back part
                $\sigma_d[p_d+1..|\sigma_d|]$ are non-empty);
          \item the front part is a single run of $a_d$, and the back
                part is a single run of $c_d$.
        \end{itemize}
  \item \emph{Terminal form.}  $p_{k-1} = 0$ and $\sigma_{k-1}$ is either
        a single run $a^m$ or two adjacent runs $a^\ell c^b$ with $a \ne c$.
  \item \emph{Alternation} (only when $k \ge 3$):
        \begin{itemize}
          \item \emph{Interior:} for $1 \le d \le k-3$,
                $a_{d+1} \ne a_d$ and $c_{d+1} \ne c_d$.
          \item \emph{Boundary:} the first character of $\sigma_{k-1}$
                differs from $a_{k-2}$, and the last character of
                $\sigma_{k-1}$ differs from $c_{k-2}$.
        \end{itemize}
\end{enumerate}
When $k = 2$, only conditions (1) and (2) apply.  Moreover,
$\mathcal{F}(\mathcal{F}^{-1}(T)) = T$ for every valid $T$.
\end{theorem}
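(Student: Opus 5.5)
We prove the two directions separately, using \Cref{thm:run_pairing} to turn both into statements about run-length encodings.

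\emph{Necessity.}  Let $T=\mathcal{F}(s)$ and let $s$ have RLE $a_1^{m_1}\cdots a_r^{m_r}$ with $k=1+\lceil r/2\rceil$.

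For $1\le d\le k-2$, \Cref{thm:run_pairing} gives $\tau_d=(a_d^{m_d}a_{r+1-d}^{m_{r+1-d}},m_d)$.  This yields condition (1): the front and back parts are non-empty single runs, with front symbol $a_d$ and back symbol $a_{r+1-d}$.  The terminal token is $a_h^{m_h}$ or $a_h^{m_h}a_{h+1}^{m_{h+1}}$ with $a_h\ne a_{h+1}$, which is condition (2).

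For alternation, consecutive content tokens have fronts $a_d,a_{d+1}$ and backs $a_{r+1-d},a_{r-d}$.  These are adjacent in the RLE, so they differ.  The boundary case works the same way: the first character of the kernel is $a_h=a_{k-1}$, which is adjacent to $a_{k-2}$.  Its last character is $a_{r+1-(k-1)}$, which is adjacent to $a_{r+2-(k-1)}$, the back run of $\tau_{k-2}$.  The only care needed is that $r+1-(k-2)$ really is the run just after the kernel, in both parities.

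\emph{Sufficiency.}  Given a valid $T$, build the run sequence explicitly.  List the front runs $a_1^{p_1},\dots,a_{k-2}^{p_{k-2}}$, then the one or two kernel runs, then the back runs $c_{k-2}^{|\sigma_{k-2}|-p_{k-2}},\dots,c_1^{|\sigma_1|-p_1}$.  Let $s$ be the concatenation; by the nesting formula used in \textsc{Nest}, $s=\mathcal{F}^{-1}(T)$.

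The key step is to show that this list is a genuine RLE, i.e.\ that adjacent symbols differ.
\begin{itemize}
\item Among the front runs, adjacent symbols differ by interior alternation; the same holds among the back runs.
\item Where the fronts meet the kernel, and where the kernel meets the backs, boundary alternation applies.
\item Inside the kernel, the two runs differ by condition (2).
\end{itemize}
Hence $s$ has $r=2(k-2)+\rho$ runs, where $\rho\in\{1,2\}$ is the number of kernel runs, and these runs are exactly as listed.  When $k=2$ there are no content tokens, and $s$ is just the kernel, with one or two runs.

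Now $\lceil r/2\rceil=k-2+1=k-1$, so \Cref{thm:run_pairing} applied to $s$ produces exactly $k$ tokens.  They pair run $j$ with run $r+1-j$, which are precisely the $j$-th front run and the $j$-th back run, and the terminal token is the kernel.  Hence $\mathcal{F}(s)=T$, so $T$ lies in the image and $\mathcal{F}(\mathcal{F}^{-1}(T))=T$.

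\emph{Main obstacle.}  The delicate point is the index bookkeeping at the kernel boundary.  One must confirm that, when the kernel is a single run, that run sits at position $h$ with $r$ odd.  One must also confirm that requiring the kernel's first and last characters to differ from $a_{k-2}$ and $c_{k-2}$ is exactly the right condition in both parities.  In particular, with a one-run kernel $a^m$, the condition forces $a\ne a_{k-2}$ and $a\ne c_{k-2}$, which is what is needed.  There is also a degenerate case to check: a span of two adjacent runs must terminate.  Under the RLE reading this is automatic, because the terminal is the middle part and \Cref{thm:run_pairing} already handles it.
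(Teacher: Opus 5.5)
Your proposal is correct and follows essentially the same route as the paper. Necessity comes from the run-pairing theorem plus RLE adjacency; sufficiency comes from concatenating the fronts, the kernel and the reversed backs, checking that adjacent runs differ, and re-applying the run-pairing theorem. Your explicit count $r=2(k-2)+\rho$ with $\lceil r/2\rceil=k-1$, and your direct identification of the reassembled string with $\mathcal{F}^{-1}(T)$ via the nesting formula (the paper instead gets the final clause from reversibility), only make explicit bookkeeping that the paper leaves implicit.
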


\begin{proof}[Proof sketch]
\emph{Necessity} is immediate from \Cref{thm:run_pairing} (giving (1)
and (2)) and RLE alternation (giving (3)).  \emph{Sufficiency} uses a
direct reassembly: the fronts, terminal, and reversed backs concatenate
into a string whose RLE is forced to be valid by conditions (1)--(3);
the full reassembly is given in \Cref{app:image_proof}.
\end{proof}

\section{Related Work}
\label{sec:related_work}

\paragraph{Run-length encoding.}
RLE replaces each maximal run $a^k$ with the pair $(a, k)$.
\Cref{thm:run_pairing} shows that Flashback is, in a precise sense, a
bilateral pairing of the RLE: it pairs the first run with the last, the second
with the second-to-last, and so on.  Where RLE processes runs left-to-right,
Flashback processes them outside-in, capturing the nesting structure that RLE
discards.  The token count $k = 1 + \lceil r/2 \rceil$ is determined entirely
by the number of runs~$r$.

\paragraph{Lempel--Ziv decompositions.}
The LZ76 factorisation~\cite{lz76} and its descendants
LZ77~\cite{lz77}/LZ78~\cite{lz78} parse left-to-right, greedily extending
each factor against a dictionary of previously seen substrings.  LZ exploits
\emph{repetitive} structure across positions; Flashback exploits
\emph{bilateral boundary} structure.  The two are complementary: LZ76 produces
$O(n / \log n)$ factors for random strings~\cite{lz76}, while Flashback
produces $\Theta(n)$ tokens.  Flashback is not a compression scheme.

\paragraph{Burrows--Wheeler Transform.}
The BWT~\cite{bwt} is a reversible permutation that clusters identical
characters.  It does not produce variable-length tokens.  Flashback produces
an ordered token sequence with split positions that directly encode
reconstruction.

\paragraph{Bidirectional macro schemes.}
Storer and Szymanski~\cite{storer1982} introduced bidirectional macro schemes,
where substrings may reference each other in both directions.  Computing the
smallest such scheme is NP-hard.  Flashback uses no references: each token is
self-contained.

\paragraph{Lyndon factorisation.}
The Lyndon factorisation~\cite{lyndon1954,duval1983} decomposes a string into
a non-increasing sequence of Lyndon words in linear time.  It is based on
lexicographic properties and produces a sequential factorisation; Flashback's
tokens encode bilateral boundary structure with an explicit nesting.

\section{Conclusion}
\label{sec:conclusion}

\textsc{Flashback} is a small, deterministic decomposition that runs in
$O(n)$ time and space and is exactly reversible.  Once phrased at the run
level, it turns out to be bilateral pairing of the run-length encoding
(\Cref{thm:run_pairing}), and that single observation is what carries the
rest of the paper: the $1+\lceil r/2\rceil$ token count, the matching
lower bound over all admissible bilateral run-peeling schemes
(\Cref{thm:optimality}), the two-symbol kernel, the palindrome test,
and the image characterisation all follow as corollaries.  One less
expected property is token-level edit locality: under a
skeleton-preserving run-length change, exactly one content token is
rewritten (\Cref{prop:run_length_locality}).  We see Flashback primarily
as a structural tool for reasoning about run-level symmetry rather than
as a competitor to existing string-compression or indexing schemes.

\bibliographystyle{plain}
\bibliography{references}

\appendix
\section{Supplementary Details}
\label{app:details}

\subsection{Iterative Formulation of Decomposition}
\label{app:iterative_decompose}

Since \textsc{Peel} is tail-recursive (each invocation makes at most one
recursive call, as the last operation), it admits a direct iterative rewrite:

\begin{algorithm}[ht]
\caption{\textsc{Flashback-Decompose-Iterative}$(s)$}
\label{alg:decompose_iter}
\begin{algorithmic}[1]
\Require String $s \in \Sigma^*$
\Ensure Ordered token sequence $T$
\State $\hat{s} \gets \texttt{@} \cdot s \cdot \texttt{\$}$
\State $T \gets [\,]$;\quad $\mathit{lo} \gets 1$;\quad
       $\mathit{hi} \gets |\hat{s}|{+}1$
\While{$\mathit{lo} < \mathit{hi}$}
    \State $\ell \gets$ leading run length at $\hat{s}[\mathit{lo}]$
    \If{$\ell = \mathit{hi} - \mathit{lo}$}
        \State Append $(\hat{s}[\mathit{lo}..\mathit{hi}),\; 0)$ to $T$
        \State \textbf{break}
    \EndIf
    \State $r \gets$ trailing run start for $\hat{s}[\mathit{hi}-1]$
    \State $\sigma \gets \hat{s}[\mathit{lo}..\mathit{lo}+\ell) \;\|\;
                         \hat{s}[r..\mathit{hi})$
    \State $\mathit{mid\_lo} \gets \mathit{lo} + \ell$;\quad
           $\mathit{mid\_hi} \gets r$
    \If{$\mathit{mid\_lo} \ge \mathit{mid\_hi}$}
        \State Append $(\sigma,\; 0)$ to $T$
        \State \textbf{break}
    \Else
        \State Append $(\sigma,\; \ell)$ to $T$
    \EndIf
    \State $\mathit{lo} \gets \mathit{mid\_lo}$;\quad
           $\mathit{hi} \gets \mathit{mid\_hi}$
\EndWhile
\State \Return $T$
\end{algorithmic}
\end{algorithm}

This version uses $O(1)$ stack space (beyond the output list).

\subsection{Random-String Statistics}
\label{app:random_string_stats}

With notation as in \Cref{thm:expected_k}: $n\ge 2$, alphabet size
$\sigma\ge 2$, $p=(\sigma-1)/\sigma$, $m=n-1$, and $q=1-2p=(2-\sigma)/\sigma$.

\begin{corollary}[Variance of $k$]
\label{cor:var_k}
\[
\mathrm{Var}[k] \;=\; \frac{m p(1-p)}{4}
\;-\; \frac{m p(1-p)\,q^{m-1}}{2}
\;+\; \frac{1-q^{2m}}{16}.
\]
The cross-term accounts for the covariance between $X$ and
$\mathbf{1}_{X\text{ even}}$.  In particular, $\mathrm{Var}[k]=0$ when $n=2$
(every length-$2$ string has $k=2$); for large~$n$,
$\mathrm{Var}[k]\approx(n-1)(\sigma-1)/(4\sigma^2)$.
\end{corollary}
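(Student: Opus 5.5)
The plan is to reduce everything to moments of the binomial variable $X=r-1\sim\mathrm{Bin}(m,p)$ from \Cref{thm:expected_k}, using the identity $k=1+\lceil(X+1)/2\rceil=1+\tfrac12\bigl(X+1+E\bigr)$, where $E:=\mathbf{1}_{X\text{ even}}$. Since additive constants do not affect variance,
\[
\mathrm{Var}[k]=\tfrac14\bigl(\mathrm{Var}[X]+\mathrm{Var}[E]+2\,\mathrm{Cov}(X,E)\bigr).
\]
I will compute the three terms separately and match them to the three terms of the claimed formula.

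The first two terms are routine. We have $\mathrm{Var}[X]=mp(1-p)$, which gives the leading term $mp(1-p)/4$. From the proof of \Cref{thm:expected_k}, $\Pr(E=1)=\tfrac12(1+q^m)$, so
\[
\mathrm{Var}[E]=\Pr(E{=}1)\bigl(1-\Pr(E{=}1)\bigr)=\tfrac14(1+q^m)(1-q^m)=\tfrac14(1-q^{2m}).
\]
This yields the term $(1-q^{2m})/16$.

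The main step is the covariance. I would write $E=\tfrac12\bigl(1+(-1)^X\bigr)$, so that $\mathbb{E}[XE]=\tfrac12\bigl(\mathbb{E}[X]+\mathbb{E}[X(-1)^X]\bigr)$. The second expectation comes from the probability generating function $G(z)=(1-p+pz)^m$ via $\mathbb{E}[Xz^X]=zG'(z)$. Evaluating at $z=-1$ gives
\[
\mathbb{E}[X(-1)^X]=-mp(1-2p)^{m-1}=-mp\,q^{m-1}.
\]
Then $\mathrm{Cov}(X,E)=\mathbb{E}[XE]-mp\cdot\tfrac12(1+q^m)=-\tfrac{mp}{2}q^{m-1}(1+q)$. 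Using $1+q=2(1-p)$, this simplifies to $\mathrm{Cov}(X,E)=-mp(1-p)q^{m-1}$. The resulting contribution $2\,\mathrm{Cov}/4=-mp(1-p)q^{m-1}/2$ is exactly the cross-term.

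Getting the sign and the factor $1+q$ right here is the only delicate point. I also need to adopt the convention $0^0=1$ for the case $\sigma=2$, $m=1$, where $q=0$.

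For the special cases, take $n=2$, i.e.\ $m=1$. Then $1-q^2=(1-q)(1+q)=4p(1-p)$, so the three terms become $p(1-p)\bigl(\tfrac14-\tfrac12+\tfrac14\bigr)=0$. This agrees with the direct observation that every length-$2$ string has $r\in\{1,2\}$ and hence $k=2$.

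For large $n$, we have $|q|<1$ when $\sigma\ge3$ and $q=0$ when $\sigma=2$. In either case the $q$-terms vanish asymptotically, up to an $O(1)$ remainder, leaving
\[
\mathrm{Var}[k]\approx\frac{mp(1-p)}{4}=\frac{(n-1)(\sigma-1)}{4\sigma^2}.
\]
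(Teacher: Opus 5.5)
Your proposal is correct and follows the route the paper intends: with $E=\mathbf{1}_{X\text{ even}}$ and $k = 1 + \tfrac12(X+1+E)$, you expand $\tfrac14\mathrm{Var}[X+E]$ into $\mathrm{Var}[X]$, $\mathrm{Var}[E]$ and $2\,\mathrm{Cov}(X,E)$, which is exactly how the stated formula arises (the paper gives no derivation and only attributes the middle term to this covariance). Your generating-function evaluation $\mathbb{E}[X(-1)^X]=-mp\,q^{m-1}$, the simplification $1+q=2(1-p)$, and the $0^0=1$ convention for $\sigma=2$, $n=2$ supply precisely the details the paper leaves implicit.
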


\begin{corollary}[Kernel alphabet statistics]
\label{cor:kernel_random}
$|\mathrm{alphabet}(\mathcal{K})|=1$ iff $r$ is odd iff $X$ is even, and
$|\mathrm{alphabet}(\mathcal{K})|=2$ otherwise.  Hence
\[
\Pr\bigl(|\mathrm{alphabet}(\mathcal{K})|=1\bigr)
\;=\; \tfrac{1}{2}\bigl(1+q^{n-1}\bigr),
\]
which equals $1/2$ exactly for $\sigma=2$, $n\ge 2$, and approaches $1/2$
exponentially for $\sigma\ge 3$.
\end{corollary}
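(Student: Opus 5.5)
The statement to prove is the kernel-alphabet corollary (\Cref{cor:kernel_random}). The plan is to reduce everything to the parity of $X=r-1\sim\mathrm{Bin}(n-1,p)$ and then use the generating-function parity identity already used in the proof of \Cref{thm:expected_k}.

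\emph{Step 1: alphabet size is determined by the parity of $r$.} For $n\ge 2$, \Cref{thm:kernel_alphabet} gives $\mathcal{K}(s)=a_h^{m_h}$ when $r$ is odd, so the kernel has one symbol. When $r$ is even, $\mathcal{K}(s)=a_h^{m_h}a_{h+1}^{m_{h+1}}$ with $a_h\ne a_{h+1}$ by the RLE alternation property (\Cref{def:rle}), so it has exactly two symbols. Hence $|\mathrm{alphabet}(\mathcal{K})|=1$ iff $r$ is odd. Since $X=r-1$, this holds iff $X$ is even; otherwise the alphabet size is $2$, because \Cref{thm:kernel_alphabet} already bounds it by $2$.

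\emph{Step 2: the probability.} As in \Cref{thm:expected_k}, the indicators $\mathbf{1}[s_i\ne s_{i+1}]$ for $i=1,\dots,n-1$ are i.i.d.\ Bernoulli$(p)$. Independence holds because, conditional on $s_1,\dots,s_i$, the event $s_{i+1}\ne s_i$ has probability $p$ regardless of the conditioning. So $X\sim\mathrm{Bin}(n-1,p)$. Then
\[
\Pr(X \text{ even})=\tfrac12\bigl(\mathbb{E}[1]+\mathbb{E}[(-1)^X]\bigr)=\tfrac12\bigl(1+(1-p-p)^{n-1}\bigr)=\tfrac12\bigl(1+q^{n-1}\bigr),
\]
using $\mathbb{E}[(-1)^X]=((1-p)+p\cdot(-1))^{n-1}$.

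\emph{Step 3: limiting behaviour.} For $\sigma=2$ we have $q=0$, and since $n-1\ge1$, $q^{n-1}=0$, so the probability is exactly $1/2$. For $\sigma\ge3$, $|q|=(\sigma-2)/\sigma<1$, so $q^{n-1}\to0$ geometrically and the probability approaches $1/2$ exponentially. Because $q<0$, the deviation alternates in sign with the parity of $n$.

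No step is a real obstacle. The only care needed is in Step 1: the "two symbols" case must use $a_h\ne a_{h+1}$ and not just the upper bound from \Cref{thm:kernel_alphabet}. The assumption $n\ge2$ also matters, so that $n-1\ge1$ and the $\sigma=2$ case gives exactly $1/2$.
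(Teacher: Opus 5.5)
Your proof is correct and follows the route the paper leaves implicit (it gives no separate proof): \Cref{thm:kernel_alphabet}, together with $a_h\ne a_{h+1}$ for the exactly-two-symbols case, reduces the event to $X$ being even, and the parity identity $\Pr(X\text{ even})=\tfrac12(1+(1-2p)^{n-1})$ quoted in the proof of \Cref{thm:expected_k} then gives the formula. You also justify the independence of the adjacent-difference indicators and derive the parity identity via $\mathbb{E}[(-1)^X]$, both of which the paper takes for granted.
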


\subsection{Full Proof of Image Characterisation (\texorpdfstring{\Cref{thm:image}}{Theorem on valid token sequences})}
\label{app:image_proof}

\begin{proof}[Proof of \Cref{thm:image}]
\emph{Necessity.}  Suppose $T = \mathcal{F}(s)$ for some
$s = a_1^{m_1} \cdots a_r^{m_r}$.  By \Cref{thm:run_pairing}, each
non-terminal token $\tau_d$ concatenates run $d$ with run $r+1-d$, so
its front is a run of $a_d$ and its back is a run of $a_{r+1-d}$; this
gives~(1) with $c_d = a_{r+1-d}$.  The terminal records the middle run
(odd $r$) or the two middle runs (even $r$), giving~(2).  For~(3), the
interior alternation $a_{d+1} \ne a_d$ and $c_{d+1} \ne c_d$ is just
the RLE alternation $a_i \ne a_{i+1}$ applied to the paired positions,
and the boundary condition is the same alternation between the last
non-terminal pair and the middle run(s).

\emph{Sufficiency.}  Given $T$ satisfying (1)--(3), we construct a
string $s$ with $\mathcal{F}(s) = T$.  Read off one front run
$a_d^{p_d}$ and one back run $c_d^{|\sigma_d| - p_d}$ from each
non-terminal $\tau_d$, and read the middle run (or two middle runs)
from the terminal $\tau_{k-1}$.  Arrange them in a single string by
placing the fronts on the left in depth order, the terminal in the
middle, and the backs on the right in reverse depth order:
\[
  s \;=\; a_1^{p_1}\, a_2^{p_2} \cdots a_{k-2}^{p_{k-2}}
          \;\|\; \sigma_{k-1} \;\|\;
          c_{k-2}^{|\sigma_{k-2}| - p_{k-2}} \cdots c_1^{|\sigma_1| - p_1}.
\]
By the interior alternation, consecutive fronts and consecutive backs
have distinct symbols; by the boundary alternation, the terminal's
first and last characters differ from the adjacent $a_{k-2}$ and
$c_{k-2}$; and by the terminal form~(2), its two runs (if any) already
differ.  Hence $s$ has a valid RLE, and \Cref{thm:run_pairing} applied
to $s$ reproduces $T$ exactly.
\end{proof}

\end{document}